
\documentclass{amsart}
\usepackage{amsfonts}
\usepackage{amsmath}
\usepackage{amssymb}

\setcounter{MaxMatrixCols}{10}

\theoremstyle{plain}
\newtheorem{theorem}{Theorem}[section]
\newtheorem{lemma}[theorem]{Lemma}
\newtheorem{proposition}[theorem]{Proposition}
\theoremstyle{definition}
\newtheorem{definition}[theorem]{Definition}

\theoremstyle{remark}
\newtheorem{remark}[theorem]{Remark}
\numberwithin{equation}{section}

\sloppy
\textheight=24cm
\textwidth=15cm
\oddsidemargin=-0.5cm
\evensidemargin=-0.5cm
\topmargin=-1.0cm

\begin{document}
\title{On the complete integrability of nonlinear dynamical systems on
discrete manifolds within the gradient-holonomic approach}
\author{Yarema A. Prykarpatsky}
\address{ The Ivan Franko State Pedagogical University, Drohobych, Lviv
region, Ukraine, and the Department of Applied Mathematics at the Agrarian
University of Krakow, Poland}
\email{yarpry@gmail.com}
\author{Nikolai N. Bogolubov (Jr.)}
\address{The V.A. Steklov Mathematical Institute of RAS, Moscow, Russian
Federation and the Abdus Salam International Centre of Theoretical Physics,
Trieste, Italy}
\email{nikolai\_bogolubov@hotmail.com}
\author{Anatoliy K. Prykarpatsky}
\address{The Department of Mining Geodesy at the AGH University of Science
and Technology, Crac\'{o}w 30059, Poland, and the Ivan Franko State
Pedagogical University, Drohobych, Lviv region, Ukraine}
\email{pryk.anat@ua.fm, prykanat@agh.edu.pl}
\author{Valeriy H. Samoylenko}
\address{The Department of Mechanics and Mathematics at the T. Shevchenko
National University, Kyiv, Ukraine}
\email{vsam@univ.kiev.ua}
\subjclass{35A30, 35G25, 35N10, 37K35, 58J70,58J72, 34A34; PACS: 02.30.Jr,
02.30.Hq}
\keywords{gradient identity, conservation laws, asymptotical analysis,
Poissonian structures, Lax type representation, finite-dimensional
reduction, Liouville integrability}
\maketitle

\begin{abstract}
A gradient-holonomic approach for the Lax type integrability analysis of
differential-discrete dynamical systems is described. The asymptotical
solutions to the related Lax equation are studied, the related gradient
identity subject to its relationship to a suitable Lax type spectral problem
is analyzed in detail. The integrability of the discrete nonlinear Schr\"{o}%
dinger, Ragnisco-Tu and Burgers-Riemann type dynamical systems is treated,
in particular, their conservation laws, compatible Poissonian structures and
discrete Lax type spectral problems are obtained within the
gradient-holonomic approach.
\end{abstract}

\section{Preliminary notions and definitions}

Consider an infinite dimensional discrete manifold $M\subset l_{2}(%
\mathbb{Z}
;%
\mathbb{C}
^{m})$ for some integer $m\in
\mathbb{Z}
_{+}$ and a general nonlinear dynamical system on it in the form
\begin{equation}
dw/dt=K[w],  \label{D1}
\end{equation}%
where $w\in M$ \ and $K$ $:$ $M\rightarrow T(M)$ is a Frechet smooth
nonlinear local functional on $M$ and $t\in \mathbb{R}$ is the evolution
parameter. As examples of dynamical systems (\ref{D1}) on a discrete
manifold $M\subset l_{2}(%
\mathbb{Z}
;%
\mathbb{C}
^{2})$ one can consider the well-known \cite{AL,MS} discrete nonlinear Schr%
\"{o}dinger equation%
\begin{equation}
\left.
\begin{array}{c}
du_{n}/dt=i(u_{n+1}-2u_{n}+u_{n-1})-iv_{n}u_{n}(u_{n+1}+u_{n-1}) \\
dv_{n}/dt=-i(v_{n+1}-2v_{n}+v_{n-1})+iv_{n}u_{n}(v_{n+1}+v_{n-1})%
\end{array}%
\right\} :=K_{n}[u,v],  \label{D2}
\end{equation}%
the so called Ragnisco-Tu \cite{RT} equation
\begin{equation}
\left.
\begin{array}{c}
du_{n}/dt=u_{n+1}-u_{n}^{2}v_{n} \\
dv_{v}/dt=-v_{n-1}+u_{n}v_{n}^{2}%
\end{array}%
\right\} :=K_{n}[u,v],  \label{D2a}
\end{equation}%
where we put $w:=(u,v)^{\intercal }\in M,$ \ and \ the inviscid
Riemann-Burgers equation \cite{Kup1} on a discrete manifold $M\subset l_{2}(%
\mathbb{Z}
;\mathbb{R}):$
\begin{equation}
dw_{n}/dt=w_{n}(w_{n+1}-w_{n-1})/2:=K_{n}[w]  \label{D2b}
\end{equation}%
and its Riemann type \cite{GBPPP,PAPP} generalizations, where $w\in M,$
having applications \cite{DEGM} in diverse physics investigations.

For studying the integrability properties of differential-difference
dynamical system \ (\ref{D1}) we will develop below a gradient-holonomic
scheme before devised in \cite{PM,MBPS,BPS,HPP} for nonlinear dynamical
systems defined on spatially one-dimensional functional manifolds and
extended in \cite{Pr} on the case of discrete manifolds.

Denote by $(\cdot ,\cdot )$ \ the standard bi-linear form on the space $%
T^{\ast }(M)\times T(M)$ naturally induced by that existing in the Hilbert
space $l_{2}(%
\mathbb{Z}
;%
\mathbb{C}
^{m}).$ Having denoted by $\mathcal{D}(M)\mathcal{\ }$smooth functionals on $%
M,$ for any functional $\gamma \in \mathcal{D}(M)$ one can define the
gradient $\mathrm{grad}$ $\gamma \lbrack w]\in T^{\ast }(M)$ as follows:%
\begin{equation}
\mathrm{grad}\text{ }\gamma \lbrack w]:=\gamma ^{\prime ,\ast }[w]\cdot 1,
\label{D3}
\end{equation}%
where the dash-sign $"^{\prime }"$ means the corresponding Frechet
derivative and the star-sign $"\ast "$ means the conjugation naturally
related with the bracket on $T^{\ast }(M)\times T(M).$

\begin{definition}
\label{Df_1} A linear smooth operator $\vartheta :T^{\ast }(M)\times T(M)$
is called \textit{Poissonian} on the manifold $M,$ if the bi-linear bracket
\begin{equation}
\{\cdot ,\cdot \}_{_{\vartheta }}:=(\mathrm{grad}\text{ }(\cdot ),\vartheta
\mathrm{grad}\text{ }(\cdot ))  \label{D4}
\end{equation}%
satisfies \cite{AM,Ar,Bl,FF,PM} \ the Jacobi identity on the space of
functionals $\mathcal{D}(M).$
\end{definition}

This means, \-in particular, that bracket (\ref{D3}) satisfies the standard
Jacobi identity on $\mathcal{D}(M).$

\begin{definition}
\label{Df_2}A linear smooth operator $\vartheta :T^{\ast }(M)\times T(M)$ is
called N\"{o}therian \cite{FF,PM,Bl} \ subject to the nonlinear dynamical
system (\ref{D1}), if the following condition
\begin{equation}
L_{K}\vartheta =\vartheta ^{^{\prime }}K-\vartheta K^{\prime ,\ast
}-K^{\prime }\vartheta =0  \label{D5}
\end{equation}%
holds identically on the manifold $M,$ where we denoted by $L_{K}$ the
corresponding Lie-derivative \cite{AM,Ar,Bl,PM} along the vector field $%
K:M\rightarrow T(M).$
\end{definition}

Assume now that the mapping $\vartheta :T^{\ast }(M)\times T(M)$ is
invertible, that is there exists the inverse mapping $\vartheta
^{-1}:=\Omega :T^{\ast }(M)\times T(M),$ and called symplectic. It then
follows easily from (\ref{D5}) then the condition
\begin{equation}
L_{K}\Omega =\Omega ^{^{\prime }}K+\Omega K^{\prime }+K^{\prime ,\ast
}\Omega =0  \label{D6}
\end{equation}%
hold identically on $M.$ Having now assumed that the manifold $M\subset
l_{2}(%
\mathbb{Z}
;%
\mathbb{C}
^{2})$ is endowed with a smooth Poissonian structure $\vartheta :T^{\ast
}(M)\times T(M)$ one can define the Hamiltonian system%
\begin{equation}
dw/dt:=-\vartheta \text{ }\mathrm{grad}\text{ }H[w],  \label{D7}
\end{equation}%
corresponding to a Hamiltonian function $H\in \mathcal{D}(M).$ As a simple
corollary of definition (\ref{D7}) one obtains that the dynamical system%
\begin{equation}
-\vartheta \text{ }\mathrm{grad}\text{ }H[w]:=K[w]  \label{D8}
\end{equation}%
satisfies the N\"{o}therian conditions (\ref{D5}). Keeping in mind the study
of the integrability problem \cite{Ar,No,BPS,Bl} subject to discrete
dynamical system (\ref{D1}), we need to construct a priori given set of
invariant with respect to it functions, \ called conservation laws, and
commuting to each other with respect to the Poisson bracket (\ref{D3}). The
following Lax criterion \cite{La,PM,BPS} proves to be very useful.

\begin{lemma}
\label{Lm_3}Any smooth solution $\varphi \in T^{\ast }(M)$ to the Lax
equation
\begin{equation}
L_{K}\text{ }\varphi =d\varphi /dt+K^{\prime ,\ast }\varphi =0,  \label{D9}
\end{equation}%
satisfying with respect to the bracket $(\cdot ,\cdot )$ the symmetry
condition
\begin{equation*}
\varphi ^{^{\prime }}=\varphi ^{^{\prime }\ast },
\end{equation*}%
is related to the conservation law
\begin{equation}
\gamma :=\underset{0}{\overset{1}{\int }}d\lambda (\varphi \lbrack w\lambda
],w).  \label{D10}
\end{equation}
\end{lemma}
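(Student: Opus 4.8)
The plan is to establish two facts: that the symmetry hypothesis $\varphi^{\prime}=\varphi^{\prime\ast}$ forces $\varphi$ to be the gradient of the functional $\gamma$ in (\ref{D10}), and that this $\gamma$ is then invariant under the flow (\ref{D1}). The symmetry condition carries a double load here—it is exactly the closedness (Poincar\'e-type) condition that lets the homotopy integral in (\ref{D10}) reconstruct a potential, and it is also what converts the Lax equation (\ref{D9}) into a statement about the gradient of a scalar. Throughout I treat $\varphi$ as a functional of $w$ alone, so that along the flow $d\varphi/dt=\varphi^{\prime}[w]K$.

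First I would verify the reconstruction $\mathrm{grad}\,\gamma=\varphi$. Differentiating (\ref{D10}) under the integral sign along an arbitrary direction $h\in T(M)$ gives
\begin{equation*}
\gamma^{\prime}[w]\cdot h=\underset{0}{\overset{1}{\int}}d\lambda\left[(\varphi^{\prime}[\lambda w]\cdot\lambda h,w)+(\varphi[\lambda w],h)\right].
\end{equation*}
Using $\varphi^{\prime}=\varphi^{\prime\ast}$ I would rewrite the first term as $\lambda(\varphi^{\prime}[\lambda w]\cdot w,h)$, so that the bracketed integrand becomes $(\tfrac{d}{d\lambda}(\lambda\varphi[\lambda w]),h)$; integrating this total $\lambda$-derivative from $0$ to $1$ leaves exactly $(\varphi[w],h)$. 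Since $h$ is arbitrary and, by (\ref{D3}), $(\mathrm{grad}\,\gamma,h)=\gamma^{\prime}\cdot h$, this yields $\mathrm{grad}\,\gamma=\varphi$.

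With the potential in hand, the conservation statement follows by a short computation. Along the flow $d\gamma/dt=\gamma^{\prime}[w]\cdot K=(\mathrm{grad}\,\gamma,K)=(\varphi,K)$, so it suffices to show $(\varphi,K)$ is constant on $M$. Setting $F[w]:=(\varphi[w],K[w])$ and differentiating, $F^{\prime}\cdot h=(\varphi^{\prime}h,K)+(\varphi,K^{\prime}h)$; applying the symmetry $\varphi^{\prime}=\varphi^{\prime\ast}$ to the first summand and the definition of the adjoint to the second gives $(\mathrm{grad}\,F,h)=(\varphi^{\prime}K+K^{\prime\ast}\varphi,h)$, i.e.\ $\mathrm{grad}\,F=\varphi^{\prime}K+K^{\prime\ast}\varphi$. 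But the right-hand side is precisely $L_{K}\varphi$ of the Lax equation (\ref{D9}), which vanishes; hence $\mathrm{grad}\,(\varphi,K)=0$ and $(\varphi,K)$ is constant on the (connected) manifold $M$.

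The hard part will be the last step: upgrading \emph{constant} to \emph{zero}. I would argue this from the boundary behaviour forced by $M\subset l_{2}$, or by evaluating $(\varphi,K)$ at a reference configuration at which $K$ vanishes (the trivial solution for the examples (\ref{D2})--(\ref{D2b})), which pins the constant to $0$ and gives $d\gamma/dt=0$. The remaining care is functional-analytic: one must check that the Frechet derivatives, the adjoints, and the exchange of $d/dt$ and $d/d\lambda$ with the integral in (\ref{D10}) are all legitimate for the local, Frechet-smooth $\varphi$ and $K$ on the discrete $l_{2}$-manifold, so that the formal manipulations above are rigorous.
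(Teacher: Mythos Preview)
Your argument is correct and follows essentially the same route as the paper: both use the Volterra homotopy formula together with the symmetry condition $\varphi'=\varphi'^{\ast}$ to identify $\varphi=\mathrm{grad}\,\gamma$, and then invoke the Lax equation (\ref{D9}) to obtain $L_K\gamma=0$. Your treatment is in fact more explicit than the paper's own proof, which records the Volterra identity (\ref{D11}) and the symmetry (\ref{D12}) and then simply asserts the conclusion (\ref{D13}); in particular, you spell out the step $\mathrm{grad}\,(\varphi,K)=\varphi'K+K'^{\ast}\varphi=L_K\varphi=0$ and flag the passage from ``constant'' to ``zero,'' which the paper leaves implicit.
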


\begin{proof}
The expression (\ref{D10}) easily obtains from the well-known Volterra
homology equalities:%
\begin{equation}
\gamma =\underset{0}{\overset{1}{\int }}\frac{d\gamma \lbrack w\lambda ]}{%
d\lambda }d\lambda \underset{0}{=\overset{1}{\int }}d\lambda (1,\gamma
^{\prime }[w\lambda ]\cdot w,)=\underset{0}{\overset{1}{\int }}d\lambda
(\gamma ^{\prime ,\ast }[w\lambda ]\cdot 1,w)=\underset{0}{\overset{1}{\int }%
}d\lambda (\mathrm{grad}\text{ }\gamma \lbrack w\lambda ],w)  \label{D11}
\end{equation}%
and
\begin{equation}
(\mathrm{grad}\text{ }\gamma \lbrack w])^{^{\prime }}=(\mathrm{grad}\text{ }%
\gamma \lbrack w])^{\prime ,\ast },  \label{D12}
\end{equation}%
holding identically on $M.$ \ Whence one ensues that there exists a function
$\gamma \in \mathcal{D}(M),$ such that
\begin{equation}
L_{K}\gamma =0,\text{ \ \ \ \ }\mathrm{grad}\text{ }\gamma \lbrack
w]=\varphi \lbrack w]  \label{D13}
\end{equation}%
for any $w\in M.$
\end{proof}

The Lax lemma naturally arises from the following generalized N\"{o}ther
type lemma.

\begin{lemma}
\label{Lm_4} Let a smooth element $\psi \in T^{\ast }(M)$ satisfy the N\"{o}%
ther condition
\begin{equation}
L_{K}\psi =d\psi /dt+K^{\prime ,\ast }\psi =\mathrm{grad}\text{ }\mathcal{L}%
_{\psi }  \label{D14}
\end{equation}%
\ for some smooth functional $\mathcal{L}_{\psi }\in \mathcal{D}(M).$ Then
the following Hamiltonian representation
\begin{equation}
K=-\vartheta \text{ }\mathrm{grad}\text{ }H_{\vartheta }  \label{D15}
\end{equation}%
holds, where%
\begin{equation}
\vartheta :=\psi ^{^{\prime }}-\psi ^{^{\prime }\ast }  \label{D16}
\end{equation}%
and the Hamiltonian function
\begin{equation}
\text{ }H_{\vartheta }=(\psi ,K)-\mathcal{L}_{\psi }.  \label{D16a}
\end{equation}
\end{lemma}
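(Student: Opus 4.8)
The plan is to establish the Hamiltonian representation (\ref{D15}) by a direct computation of $\mathrm{grad}\,H_\vartheta$ and an identification with $-\vartheta K$, using the N\"other condition (\ref{D14}) to cancel the symmetric terms. First I would rewrite (\ref{D14}) pointwise on $M$: along the flow of $K$ one has $d\psi/dt=\psi'[w]\cdot K=\psi'K$, so the N\"other condition reads $\psi'K+K'^{,\ast}\psi=\mathrm{grad}\,\mathcal{L}_\psi$. Next I would differentiate the proposed Hamiltonian $H_\vartheta=(\psi,K)-\mathcal{L}_\psi$ in an arbitrary direction $h\in T(M)$. The Leibniz rule for the pairing gives $(\psi,K)'[w]\cdot h=(\psi'h,K)+(\psi,K'h)$, and transporting $h$ across the bracket by means of the adjoints fixed in the excerpt, $(\psi'h,K)=(\psi'^{\ast}K,h)$ and $(\psi,K'h)=(K'^{,\ast}\psi,h)$, so that
\[
\mathrm{grad}\,H_\vartheta=\psi'^{\ast}K+K'^{,\ast}\psi-\mathrm{grad}\,\mathcal{L}_\psi .
\]

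Then I would substitute $\mathrm{grad}\,\mathcal{L}_\psi=\psi'K+K'^{,\ast}\psi$ from the unfolded N\"other condition. The two occurrences of $K'^{,\ast}\psi$ cancel, and what survives is
\[
\mathrm{grad}\,H_\vartheta=\psi'^{\ast}K-\psi'K=-(\psi'-\psi'^{\ast})K=-\vartheta K,
\]
with $\vartheta=\psi'-\psi'^{\ast}$ as in (\ref{D16}). This is the heart of the argument: it is precisely the Hamiltonian equation written in symplectic form, $\vartheta K=-\mathrm{grad}\,H_\vartheta$. The computation itself is routine once the adjoint bookkeeping is carried out carefully, so I do not expect the algebra to be the obstacle.

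I expect the genuine subtlety to concern which operator stands in (\ref{D15}). The object $\vartheta=\psi'-\psi'^{\ast}$ is manifestly skew-symmetric, $\vartheta^{\ast}=-\vartheta$, but viewed as a map $T(M)\to T^{\ast}(M)$ it is the \emph{symplectic} operator; the identity I actually derive, $\vartheta K=-\mathrm{grad}\,H_\vartheta$, must be inverted to read off $K=-\vartheta^{-1}\mathrm{grad}\,H_\vartheta$, so the $\vartheta$ in the stated conclusion (\ref{D15}) is to be understood as the associated Poissonian operator $(\psi'-\psi'^{\ast})^{-1}$, under a nondegeneracy assumption on $\psi'-\psi'^{\ast}$. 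Finally, to call (\ref{D15}) a genuine Hamiltonian representation in the sense of Definition~\ref{Df_1} rather than a bare identity, one must know that $\vartheta$ is Poissonian: skew-symmetry comes for free from (\ref{D16}), but the Jacobi identity (equivalently, closedness of the symplectic form $\psi'-\psi'^{\ast}$) has to be supplied from the structure of $\psi$, and that is where I anticipate the real work lies.
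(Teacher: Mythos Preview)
The paper states Lemma~\ref{Lm_4} without proof, so there is no argument in the text to compare against directly. Your derivation is the standard one and it is correct: unfolding $d\psi/dt=\psi'K$, computing the Fr\'echet derivative of $(\psi,K)$ via adjoints, and subtracting the N\"other relation leaves precisely $\mathrm{grad}\,H_\vartheta=-(\psi'-\psi'^{\ast})K$. This is the expected calculation and the authors evidently regard it as routine.

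Your observation about the direction of $\vartheta$ is also well taken and is not a defect of your argument but of the lemma's phrasing. The operator $\psi'-\psi'^{\ast}$ maps $T(M)\to T^{\ast}(M)$ and is the \emph{symplectic} (co-Poissonian) operator; the identity you obtain is $(\psi'-\psi'^{\ast})K=-\,\mathrm{grad}\,H_\vartheta$, so the Poissonian operator in (\ref{D15}) must be its inverse. The paper in fact adopts exactly this reading elsewhere: in equation~(\ref{D39}) it writes $\vartheta^{-1}=\varphi^{(0),\prime}-\varphi^{(0),\prime\ast}+\cdots$, confirming that (\ref{D16}) is intended as a definition of $\vartheta^{-1}$ (equivalently of $\Omega$, cf.~(\ref{D6})) rather than of $\vartheta$ itself. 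So your caveat is appropriate and matches the authors' actual usage. As for the Jacobi identity, the lemma does not assert it, and the paper treats the Poissonian property as something to be verified separately case by case (see the asymptotic constructions around (\ref{D38})--(\ref{D41})); you are right that it does not follow from the computation above.
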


It is easy to see that Lemma\ \ref{Lm_3} follows from Lemma \ \ref{Lm_4}, if
the conditions $\psi ^{^{\prime }}=\psi ^{^{\prime }\ast }$ and $\mathcal{L}%
_{\psi }=0$ are imposed on \ (\ref{D14}).

Assume now that equation (\ref{D14}) allows an additional not symmetric
smooth solution $\phi \in T^{\ast }(M):$%
\begin{equation}
L_{K}\phi =d\phi /dt+K^{\prime ,\ast }\phi =\mathrm{grad}\text{ }\mathcal{L}%
_{\phi }.  \label{D17}
\end{equation}%
This means that our system (\ref{D1}) is bi-Hamiltonian:%
\begin{equation}
-\vartheta \text{ }\mathrm{grad}\text{ }H_{\vartheta }=K=-\eta \text{ }%
\mathrm{grad}\text{ }H_{\eta },  \label{D18}
\end{equation}%
where, by definition,
\begin{equation}
\eta :=\phi ^{^{\prime }}-\phi ^{^{\prime }\ast },\text{ \ }H_{\eta }=(\phi
,K)-\mathcal{L}_{\phi }.  \label{D19}
\end{equation}

\begin{definition}
\label{Df_5} One says that two Poissonian structures $\vartheta ,\eta
:T^{\ast }(M)\rightarrow T(M)$ on $M$ are compatible \cite{Ma,FF,PM,Bl}, if
for any $\lambda ,\mu \in \mathbb{R}$ the linear combination $\lambda
\vartheta +\mu \eta :T^{\ast }(M)\rightarrow T(M)$ will be also Poissonian
on $M.$
\end{definition}

It is easy to derive that this condition is satisfied if, for instance,
there exist the inverse operator $\vartheta ^{-1}:T(M)\rightarrow T^{\ast
}(M)$ and the expression $\eta (\vartheta ^{-1}\eta ):T^{\ast
}(M)\rightarrow T(M)$ is also Poissonian on $M.$

Concerning the integrability problem posed for the infinite-dimensional
dynamical system (\ref{D1}) on the discrete manifold $M$ it is, in general,
necessary, but not enough \cite{No,PM,BPS}, to prove the existence of an
infinite hierarchy of commuting to each other with respect to the Poissonian
structure (\ref{D3}) conservation laws.

Since in the case of the Lax type integrability almost of (\ref{D1}) there
exist compatible Poissonian structures and related hierarchies of
conservation laws, we will further constrain our analysis by devising an
integrability algorithm under a priori assumption that a given nonlinear
dynamical system (\ref{D1}) on the manifold $M$ is Lax type integrable. This
means that it possesses a related Lax type representation in the following,
generally written form:%
\begin{equation}
\Delta f_{n}:=f_{n+1}=l_{n}[w;\lambda ]f_{n},  \label{D20}
\end{equation}%
where $f:=\left\{ f_{n}\in
\mathbb{C}
^{r}:n\in
\mathbb{Z}
\right\} \subset l_{2}(%
\mathbb{Z}
;%
\mathbb{C}
^{r})$ for some integer $r\in
\mathbb{Z}
_{+}$ and matrices $l_{n}[w;\lambda ]\in End%
\mathbb{C}
^{r},$ $n\in
\mathbb{Z}
,$ in (\ref{D20}) are local matrix-valued functionals on $M,$ depending on
the "spectral" parameter $\lambda \in
\mathbb{C}
,$ invariant with respect to our dynamical system (\ref{D1}).

Taking into account that the Lax representation (\ref{D20}) is `local' with
respect to the discrete variable $n\in
\mathbb{Z}
,$ we will assume for convenience, that \ our manifold $M:=M_{(N)}\subset
l_{\infty }(%
\mathbb{Z}
;%
\mathbb{C}
^{m})$ is periodic with respect to the discrete index $n\in
\mathbb{Z}
_{N},$ that is for any $n\in
\mathbb{Z}
_{N}:=%
\mathbb{Z}
/N%
\mathbb{Z}
$ and $\lambda \in
\mathbb{C}
$%
\begin{equation}
l_{n}[w;\lambda ]=l_{n+N}[w;\lambda ]  \label{D21}
\end{equation}%
for some integer $N\in
\mathbb{Z}
_{+}.$ In this case the smooth functionals on $M_{(N)}$ can be represented
as
\begin{equation}
\gamma :=\sum_{n\in
\mathbb{Z}
_{N}}\gamma _{n}[w]  \label{D22}
\end{equation}%
for some local `Frechet' smooth densities $\gamma _{n}:M_{(N)}\rightarrow
\mathbb{C}
,$ $n\in
\mathbb{Z}
_{N}.$

\section{The integrability analysis: gradient-holonomic scheme}

Consider the representation (\ref{D20}) and define its fundamental solution $%
F_{m,n}(\lambda )\in Aut(%
\mathbb{C}
^{r}),$ $m,n\in
\mathbb{Z}
_{N},$ satisfying the equation
\begin{equation}
F_{m+1,n}(\lambda )=l_{m}[w;\lambda ]F_{m,n}(\lambda )  \label{D23}
\end{equation}%
and the condition
\begin{equation}
\left. F_{m,n}(\lambda )\right\vert _{m=n}=\mathbf{1}  \label{D23a}
\end{equation}%
for all $\lambda \in
\mathbb{C}
$ and $n\in
\mathbb{Z}
_{N}.$ Then the matrix function
\begin{equation}
S_{n}(\lambda ):=F_{n+N,n}(\lambda )  \label{D24}
\end{equation}%
is called the \textit{monodromy} matrix for the linear equation (\ref{D21})
and satisfies for all $n\in
\mathbb{Z}
_{N}$ \ the following Novikov-Lax type relationship:%
\begin{equation}
S_{n+1}(\lambda )l_{n}=l_{n}S_{n}(\lambda ).  \label{D25}
\end{equation}%
It easy to compute that $S_{n}(\lambda ):=\underset{k=\overleftarrow{0,N-1}}{%
\prod }l_{n+k}[w;\lambda ]$ \ \ \ owing to the periodical condition (\ref%
{D21}). \ \ Construct now the following generating functional:
\begin{equation}
\bar{\gamma}(\lambda ):=trS_{n}(\lambda )  \label{D26}
\end{equation}%
and assume that there exists its asymptotical expansion
\begin{equation}
\bar{\gamma}(\lambda )\simeq \sum_{j\in \mathbb{Z}_{+}}\bar{\gamma}%
_{j}\lambda ^{j_{0}-j}  \label{D27}
\end{equation}%
as $\lambda \rightarrow \infty $ for some fixed $j_{0}\in
\mathbb{Z}
_{+}.$ Then, owing to the evident condition%
\begin{equation}
D_{n}\overline{\gamma }(\lambda )=0  \label{D28}
\end{equation}%
for all $n\in
\mathbb{Z}
_{N},$ \ where we put, by definition, the `discrete' derivative%
\begin{equation}
D_{n}:=\Delta -1,  \label{D29}
\end{equation}%
we obtain that all functionals $\ \ \bar{\gamma}_{j}\in \mathcal{D}%
(M_{(N)}), $ $j\in
\mathbb{Z}
_{+},$ are independent of the discrete index $n\in
\mathbb{Z}
_{N}$ \ and are simultaneously conservation laws for the dynamical system (%
\ref{D1}).

Assume additionally that the following natural condition holds: the gradient
vector%
\begin{equation}
\bar{\varphi}_{n}(\lambda ):=\mathrm{grad}\text{ }\bar{\gamma}(\lambda
)_{n}[w]=tr\text{ }l^{\prime ,\ast }(S_{n}(\lambda )l_{n}^{-1}),
\label{D29a}
\end{equation}%
which solves the Lax determining equation (\ref{D9}), satisfies, owing to (%
\ref{D25}), for all $\lambda \in
\mathbb{C}
$ the next gradient relationship:%
\begin{equation}
z(\lambda )\vartheta \text{ }\bar{\varphi}(\lambda )=\eta \text{ }\bar{%
\varphi}(\lambda ),\text{ }  \label{D30}
\end{equation}%
where $z:%
\mathbb{C}
\rightarrow
\mathbb{C}
$ is some meromorphic mapping, $\vartheta $ and $\eta $ $:T^{\ast
}(M_{(N)})\rightarrow T(M_{(N)})$ are compatible Poissonian on the manifold $%
M_{(N)}$ and N\"{o}therian operators with respect to the dynamical system (%
\ref{D1}). \ As a corollary of the above condition one follows easily that
the generating functional $\overline{\gamma }(\lambda )\in \mathcal{D}%
(M_{(N)})$ satisfies the commutation relationships
\begin{equation}
\{\bar{\gamma}(\lambda ),\bar{\gamma}(\mu )\}_{\vartheta }=0=\{\bar{\gamma}%
(\lambda ),\bar{\gamma}(\mu )\}_{\eta }  \label{D30a}
\end{equation}%
for all $\lambda ,\mu \in \mathbb{C}.$ \ Thereby, if to define on the
manifold $M_{(N)}$ \ a generating \ dynamical system
\begin{equation}
dw/d\tau :=-\vartheta \text{ }\mathrm{grad}\text{ }\bar{\gamma}(\lambda )[w]
\label{D30b}
\end{equation}%
\ \ as $\lambda \rightarrow \infty ,$ it follows easily from (\ref{D30a})
that the hierarchy of functionals (\ref{D27}) is its conservation law.

Since the existence of an infinite hierarchy of commuting to each other
conservation laws is characteristic concerning the Lax type integrability of
the nonlinear dynamical system (\ref{D1}), this property can be effectively
implemented into the scheme of our analysis. Namely, the following statement
holds.

\begin{proposition}
\label{Pr_5} The Lax equation (\ref{D9}) allows the following asymptotical
as $\lambda \rightarrow \infty $ periodical solution $\varphi (\lambda )\in
T^{\ast }(M_{(N)}):$
\begin{equation}
\varphi _{n}(\lambda )\simeq a_{n}(\lambda )\exp [\omega (t;\lambda )]%
\underset{j=0}{\overset{n}{\prod }}\sigma _{j}(\lambda ),  \label{D31}
\end{equation}%
where for all $n\in \mathbb{Z}$
\begin{eqnarray}
a_{n}(\lambda ) &:&=(1,a_{(1),n}[w;\lambda ],a_{(2),n}[w;\lambda
],...,a_{(m-1),n}[w;\lambda ])^{\tau },  \label{D32} \\
a_{(k),n}(\lambda ) &\simeq &\underset{s\in
\mathbb{Z}
_{+}}{\sum }a_{(k),n}^{(s)}[w]\lambda ^{-s+\widetilde{a}},\sigma
_{j}(\lambda )\simeq \underset{s\in
\mathbb{Z}
_{+}}{\sum }a_{j}^{(s)}[w]\lambda ^{-s+\widetilde{\sigma }},  \notag
\end{eqnarray}%
$k=\overline{1,m-1}$ and $\omega (t;\cdot ):\mathbb{C}\rightarrow \mathbb{C}%
, $ $t\in \mathbb{R},$ is some dispersion function. Moreover the functional $%
\gamma (\lambda ):=\underset{n\in \mathbb{Z}_{N}}{\sum }\ln (\lambda ^{-%
\tilde{\sigma}}\sigma _{n}[w;\lambda ])\in \mathcal{D}(M_{(N)})$ is a
generating function of conservation laws for the dynamical system (\ref{D1}).
\end{proposition}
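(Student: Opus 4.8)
The plan is to construct the solution \eqref{D31} directly by a discrete WKB substitution and then match the asymptotics of the Lax equation \eqref{D9} order by order in $\lambda^{-1}$. First I would insert the multiplicative ansatz $\varphi_n(\lambda)\simeq a_n(\lambda)\exp[\omega(t;\lambda)]\prod_{j=0}^n\sigma_j(\lambda)$ into $d\varphi/dt+K^{\prime,\ast}\varphi=0$. Since $K$ is a local functional, the adjoint $K^{\prime,\ast}$ is a matrix difference operator carrying finitely many shifts $\Delta^{\pm p}$ whose coefficients depend only on $w$ and its neighbours; acting on the ansatz, each shift $\Delta^{\pm p}$ reproduces the common factor $\exp[\omega]\prod_{j=0}^n\sigma_j$ up to the local product $\prod\sigma^{\pm 1}$ and the ratio $a_{n\pm p}/a_n$. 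Dividing through by that common factor turns \eqref{D9} into an algebraic (in $\lambda$) relation among the local symbols $\sigma_n$, $a_n$ and the logarithmic time derivative $\dot\omega+\dot a_n/a_n+\sum_{j=0}^n\dot\sigma_j/\sigma_j$. The partial sum here is \emph{a priori} nonlocal in $n$, so consistency of the scheme forces $\dot\sigma_n/\sigma_n$ to be a discrete total difference $D_n(\cdot)$ in the sense of \eqref{D29}; this single requirement both makes the partial sum telescope into a local quantity and is the seed of the conservation law claimed at the end.

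At the leading order in $\lambda$ this relation is the characteristic, or dispersion, equation: balancing $\dot\omega$ against the top symbol of $K^{\prime,\ast}$ fixes $\omega(t;\lambda)$ together with the leading exponents $\tilde a,\tilde\sigma$ and the leading direction $a_n^{(0)}$, normalised by the convention that the first component of $a_n$ equals $1$. At each subsequent order $s\in\mathbb{Z}_+$ the equation becomes a linear inhomogeneous recursion for $a_{(k),n}^{(s)}$ and $a_j^{(s)}$ whose right–hand side is built from already determined lower–order data; I would solve it step by step, reading off $\sigma_j(\lambda)$ and the remaining components of $a_n(\lambda)$ uniquely. Because the coefficients of $l_n$, and hence of $K^{\prime,\ast}$, are $N$–periodic by \eqref{D21}, the recursively produced densities are local and $N$–periodic, so that the resulting $\varphi(\lambda)$ is a genuine element of $T^{\ast}(M_{(N)})$.

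The main obstacle I anticipate is precisely the solvability of this recursion: one must verify that the leading symbol operator appearing at each level is invertible on the complement of the normalisation direction, a non–resonance condition guaranteeing that no secular obstruction arises and that every $a_{(k),n}^{(s)}$, $a_j^{(s)}$ is determined as a bona fide local functional of $w$. Establishing this triangular, gauge–fixed invertibility, and checking that the formal orders in $\lambda$ balance consistently, is the technical heart of the argument; the remainder is bookkeeping.

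For the concluding assertion I would exploit the product structure. With the first component of $a_n$ normalised to $1$, the first component of the ansatz gives $\sigma_{n+1}(\lambda)=\varphi^{(1)}_{n+1}(\lambda)/\varphi^{(1)}_n(\lambda)$, so $\ln\sigma_{n+1}=D_n\ln\varphi^{(1)}_n$ is a discrete total difference. Summing over the period and using $N$–periodicity, $\sum_{n\in\mathbb{Z}_N}\ln\sigma_n(\lambda)$ equals the logarithm of the Floquet multiplier $\prod_{j\in\mathbb{Z}_N}\sigma_j(\lambda)=\varphi_{n+N}/\varphi_n$, i.e. the logarithm of an eigenvalue of the monodromy matrix $S_n(\lambda)$ of \eqref{D24}. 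By the Novikov–Lax relation \eqref{D25} the monodromy evolves isospectrally along \eqref{D1}, whence $\gamma(\lambda)=\sum_{n\in\mathbb{Z}_N}\ln(\lambda^{-\tilde\sigma}\sigma_n[w;\lambda])$ is time–invariant; equivalently, differentiating $\gamma$ along the flow and invoking the relation $\dot\sigma_n/\sigma_n=D_n(\cdot)$ obtained above shows $d\gamma/dt$ to be a sum of discrete total differences, which vanishes on the periodic lattice $\mathbb{Z}_N$. Expanding $\gamma(\lambda)$ in powers of $\lambda^{-1}$ then produces the advertised infinite hierarchy of conservation laws, exactly as for $\bar\gamma(\lambda)$ in \eqref{D26}--\eqref{D28}.
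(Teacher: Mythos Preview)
Your proposal is correct but follows a different route from the paper. The paper's own proof is top-down: it invokes the a priori Lax representation \eqref{D20} and the monodromy matrix \eqref{D24}, observes via Lemma~\ref{Lm_3} and the gradient formula \eqref{D29a} that $\bar\varphi=\mathrm{grad}\,\bar\gamma(\lambda)$ solves \eqref{D9}, and then reads off the product form \eqref{D31} from the factorised structure $S_n(\lambda)=\prod l_{n+k}$; the conservation of $\gamma(\lambda)$ is deduced from the period-translation of \eqref{D31} together with the already-known conservation of $\bar\gamma(\lambda)$. You instead construct \eqref{D31} bottom-up by a discrete WKB substitution directly into \eqref{D9}, solving the resulting recursion for $a^{(s)}_{(k),n}$ and $\sigma^{(s)}_n$ order by order; this is precisely the procedure the paper carries out later in the concrete examples (Lemmas~\ref{Lm_8a} and~\ref{Lm_9}), but not how the general Proposition is argued. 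Your approach buys a self-contained construction that does not presuppose \eqref{D20}, and your alternative argument for $d\gamma/dt=0$ via the telescoping identity $\dot\sigma_n/\sigma_n=D_n(\cdot)$ is independent of the monodromy altogether; the paper's approach is shorter but leans on the standing Lax-integrability assumption. The solvability caveat you flag (invertibility of the leading symbol off the normalised direction) is real and is not addressed in the paper's terse proof either; it is implicitly verified case by case in the examples.
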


\begin{proof}
Owing to Lemma\ (\ref{Lm_3}) and relationship (\ref{D29a}) functional (\ref%
{D26}) is a conservation law for our dynamical system (\ref{D1}). Based now
on expression (\ref{D24}) and equation (\ref{D20}) one arrives at the
solution representation (\ref{D31}) to the Lax equation (\ref{D9}). Now,
making use of the periodicity of the manifold \ $M_{(N)},$ we obtain \ from
the period translation of (\ref{D31}) \ that the functional
\begin{equation}
\gamma (\lambda ):=\underset{n\in \mathbb{Z}_{N}}{\sum }\ln (\lambda ^{-%
\tilde{\sigma}}\sigma _{n}[w;\lambda ])\simeq \underset{j\in \mathbb{Z}_{+}}{%
\sum }\gamma _{j}\lambda ^{-j}  \label{D33}
\end{equation}%
generates an infinite hierarchy of conservation laws to (\ref{D1}), \ so
finishing the proof.
\end{proof}

Thus, if \ we start the Lax type integrability analysis of a priori given
nonlinear dynamical system (\ref{D1}), it is necessary, as the first step,
to study the asymptotical solutions (\ref{D31}) to the corresponding Lax
equation (\ref{D9}) and construct a related hierarchy of conservation laws
in the functional form (\ref{D33}), taking into account expansions (\ref{D32}%
).

\begin{remark}
\label{Rm_6} It is easy to observe that, owing to the arbitrariness of the
period $N\in
\mathbb{Z}
_{+}$ of the manifold $M_{(N)},$ all of the finite-sum expressions obtained
above can be generalized to the corresponding infinite dimensional manifold $%
M\subset l_{2}(%
\mathbb{Z}
;%
\mathbb{C}
^{m}),$ if the corresponding infinite series persist to be convergent.
\end{remark}

Since our dynamical system (\ref{D1}) under the above conditions is a
bi-Hamiltonian flow on the manifold $M_{(N)},$ as the next step of the
related compatible Poissonian or symplectic structures, satisfying,
respectively, either equality (\ref{D5}) or equality \ (\ref{D6}). Before
doing this, we need to formulate the following lemma.

\begin{lemma}
\label{Lm_7} All functionals $\gamma _{j}\in \mathcal{D}(M_{(N)}),$ entering
the expansion (\ref{D33}), are commuting to each other with respect to both
Poissonian structures $\vartheta ,\eta :T^{\ast }(M_{(N)})\rightarrow
T(M_{(N)}),$ satisfying the gradient relationship(\ref{D35}).

\begin{proof}
Based on representations (\ref{D31}) and (\ref{D29a}) one obtains that there
holds the asymptotical as $\lambda \rightarrow \infty $ \ relationship%
\begin{equation}
\ln \bar{\gamma}(\lambda )\simeq \gamma (\lambda ).  \label{D35}
\end{equation}%
Since the generative function $\bar{\gamma}(\lambda )\in \mathcal{D}%
(M_{(N)}) $ satisfies the commutation relationships (\ref{D30a}), the same
also holds, owing to (\ref{D35}), for the generating function $\gamma
(\lambda )\in \mathcal{D}(M_{(N)}),$ finishing the proof.
\end{proof}
\end{lemma}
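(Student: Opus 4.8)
The plan is to establish Lemma~\ref{Lm_7} by transporting the already-proven involutivity of the monodromy trace $\bar\gamma(\lambda)$ across the logarithmic link~(\ref{D35}) to the generating function $\gamma(\lambda)$, and then extracting the coefficientwise statement for the individual $\gamma_j$. First I would verify the asymptotic identity $\ln\bar\gamma(\lambda)\simeq\gamma(\lambda)$ as $\lambda\to\infty$ carefully. Using $\bar\gamma(\lambda)=\operatorname{tr}S_n(\lambda)$ together with the factorization $S_n(\lambda)=\prod_{k}l_{n+k}[w;\lambda]$ from~(\ref{D25})--(\ref{D26}), and the diagonalizing structure encoded in the eigenvector ansatz~(\ref{D31})--(\ref{D32}), the monodromy acts on the asymptotic eigenvector $a_n(\lambda)$ with eigenvalue $\exp\!\big(\sum_{n\in\mathbb{Z}_N}\ln(\lambda^{-\tilde\sigma}\sigma_n[w;\lambda])\big)=\exp[\gamma(\lambda)]$ up to the dispersion factor, so the dominant asymptotic term of the trace is $\exp[\gamma(\lambda)]$; taking logarithms yields~(\ref{D35}).

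Next I would use the gradient relationship~(\ref{D30}), namely $z(\lambda)\vartheta\,\bar\varphi(\lambda)=\eta\,\bar\varphi(\lambda)$ with $\bar\varphi(\lambda)=\mathrm{grad}\,\bar\gamma(\lambda)$, to deduce the involutivity~(\ref{D30a}) of $\bar\gamma$ with respect to both Poissonian structures $\vartheta$ and $\eta$. The standard argument here is that compatibility of $\vartheta$ and $\eta$ together with the common eigenvalue relation for the gradients forces the Poisson brackets $\{\bar\gamma(\lambda),\bar\gamma(\mu)\}_\vartheta$ and $\{\bar\gamma(\lambda),\bar\gamma(\mu)\}_\eta$ to vanish for all $\lambda,\mu\in\mathbb{C}$; this is exactly the conclusion recorded immediately after~(\ref{D30}). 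Since $\mathrm{grad}$ is linear and the bracket is bilinear, and since $\ln$ is a smooth function of a single functional, one has $\mathrm{grad}\,\gamma(\lambda)=\bar\gamma(\lambda)^{-1}\,\mathrm{grad}\,\bar\gamma(\lambda)$, so the gradient of $\gamma(\lambda)$ is a scalar (in $M$) multiple of the gradient of $\bar\gamma(\lambda)$. Substituting this into the bracket~(\ref{D4}) shows $\{\gamma(\lambda),\gamma(\mu)\}_\vartheta$ equals $\bar\gamma(\lambda)^{-1}\bar\gamma(\mu)^{-1}\{\bar\gamma(\lambda),\bar\gamma(\mu)\}_\vartheta=0$, and identically for $\eta$.

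Finally I would pass from the $(\lambda,\mu)$-indexed involutivity to the coefficientwise statement. Expanding $\gamma(\lambda)\simeq\sum_{j}\gamma_j\lambda^{-j}$ as in~(\ref{D33}) and inserting the expansions into $\{\gamma(\lambda),\gamma(\mu)\}_\vartheta=0$, the double series $\sum_{i,j}\{\gamma_i,\gamma_j\}_\vartheta\,\lambda^{-i}\mu^{-j}$ vanishes identically in $\lambda$ and $\mu$; by uniqueness of asymptotic expansions every coefficient must vanish, giving $\{\gamma_i,\gamma_j\}_\vartheta=0$ for all $i,j\in\mathbb{Z}_+$, and the same for $\eta$. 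The main obstacle I anticipate is the rigorous justification of step~(\ref{D35}): one must argue that the subdominant eigenvalues of $S_n(\lambda)$ do not spoil the logarithm of the trace at every order in the $\lambda^{-1}$ expansion, i.e.\ that $\ln\operatorname{tr}S_n(\lambda)$ and $\ln(\text{leading eigenvalue})$ agree asymptotically term by term. This requires control of the spectral gap of the monodromy matrix as $\lambda\to\infty$, which is exactly where the asymptotic ansatz~(\ref{D31}) and its convergence (cf.\ Remark~\ref{Rm_6}) do the essential work.
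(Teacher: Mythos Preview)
Your proposal is correct and follows essentially the same line as the paper: establish the asymptotic identity $\ln\bar\gamma(\lambda)\simeq\gamma(\lambda)$, invoke the already-recorded involutivity~(\ref{D30a}) of $\bar\gamma$, and transfer it to $\gamma$ via the logarithmic link. You supply considerably more detail than the paper's terse argument---in particular the explicit scalar-multiple relation $\mathrm{grad}\,\gamma(\lambda)=\bar\gamma(\lambda)^{-1}\,\mathrm{grad}\,\bar\gamma(\lambda)$, the coefficientwise extraction, and the spectral-gap caveat---but the strategy is the same.
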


Proceed now to constructing the related to dynamical system (\ref{D1})
Poissonian structures $\vartheta ,\eta :T^{\ast }(M_{(N)})\rightarrow
T(M_{(N)}).$ \ Note here, that these Poissonian structures are N\"{o}therian
also for the whole hierarchy of dynamical systems
\begin{equation}
dw/dt_{j}:=-\vartheta \text{ }\mathrm{grad}\text{ }\gamma _{j}[w],
\label{D36}
\end{equation}%
where $t_{j}\in \mathbb{R},j\in
\mathbb{Z}
_{+},$ are the corresponding\ evolution parameters, and which, owing to (\ref%
{D30a}), commute to each other on the manifold $M_{(N)}.$ The latter makes
it possible to apply Lemma \ \ref{Lm_4} to arbitrary one of the dynamical
systems (\ref{D36}), if the related vector fields commuting with (\ref{D1})
are supposed to be found before.

To solve analytically equation (\ref{D14}) subject to an element $\varphi
\in T^{\ast }(M_{(N)})$ one can, in the case of a polynomial dynamical
system (\ref{D1}), make use of the well known asymptotical small parameter
method \cite{MBPS,PM}. If applying this approach, it is necessary to take
into account the following expansions at zero - element $\overline{w}=0\in
M_{(N)}$ with respect to the small parameter $\mu \rightarrow 0:$%
\begin{eqnarray}
w &:&=\mu w^{(1)},\text{ }\varphi \lbrack w]=\varphi ^{(0)}+\mu \varphi
^{(1)}[w]+\mu ^{2}\varphi ^{(2)}[w]+...,  \notag \\
d/dt &=&d/dt_{0}+\mu d/dt_{1}+\mu ^{2}d/dt_{2}+...,  \notag \\
K[w] &=&\mu K^{(1)}[w]+\mu ^{(2)}K^{(2)}[w]+...,  \label{D37} \\
K^{\prime }[w] &=&K_{0}^{\prime }+\mu K_{1}^{\prime }[w]+\mu
^{2}K_{2}^{\prime }[w]+...,  \notag \\
\mathrm{grad}\text{ }\mathcal{L}[w] &=&\mathrm{grad}\text{ }\mathcal{L}%
^{(0)}+\mu \mathrm{grad}\text{ }\mathcal{L}^{(1)}[w]+\mu ^{2}\mathrm{grad}%
\text{ }\mathcal{L}^{(2)}[w]+...\text{ }.  \notag
\end{eqnarray}%
Having solved the corresponding set of linear nonuniform functional equations%
\begin{eqnarray}
d\varphi ^{(0)}/dt_{0}+K_{0}^{\prime \ast }\varphi ^{(0)} &=&\mathrm{grad}%
\text{ }\mathcal{L}^{(0)},  \notag \\
d\varphi ^{(1)}/dt_{0}+K_{0}^{\prime \ast }\varphi ^{(1)} &=&\mathrm{grad}%
\text{ }\mathcal{L}^{(1)}-K_{0}^{\prime \ast }\varphi ^{(0)},  \label{D38} \\
d\varphi ^{(2)}/dt_{0}+K_{0}^{\prime \ast }\varphi ^{(2)} &=&\mathrm{grad}%
\text{ }\mathcal{L}^{(2)}-K_{1}^{\prime \ast }\varphi ^{(1)}-K_{2}^{\prime
\ast }\varphi ^{(0)}  \notag
\end{eqnarray}%
and so on, by means the standard Fourier transform applied to the suitable $%
N $-periodical functions, one can obtain the related Poissonian structure as
the series
\begin{equation}
\vartheta ^{-1}=\varphi ^{(0),\prime }-\varphi ^{(0),\prime \ast }+\mu
(\varphi ^{(1),\prime }-\varphi ^{(1),\prime \ast })+...  \label{D39}
\end{equation}%
and next to put in (\ref{D39}) at the end $\mu =1.$

Another direct way of obtaining an Poissonian operator $\vartheta :T^{\ast
}(M_{(N)})\rightarrow T(M_{(N)})$ for (\ref{D1}) is to solve by means of the
same asymptotical small parameter approach the N\"{o}therian equation (\ref%
{D5}), having preliminary reduced it to the following set of linear
nonuniform equations:%
\begin{eqnarray}
\frac{d}{dt_{0}}(\vartheta _{0}\varphi ^{(0)}) &=&K_{0}^{\prime }(\vartheta
_{0}\varphi ^{(0)}),  \label{D40} \\
\frac{d}{dt_{0}}(\vartheta _{1}\varphi ^{(0)}) &=&K_{0}^{\prime }(\vartheta
_{1}\varphi ^{(0)})+\vartheta _{0}K_{1}^{\prime ,\ast }\varphi
^{(0)}+K_{1}^{\prime }\vartheta _{0}\varphi ^{(0)},  \notag \\
\frac{d}{dt_{0}}(\vartheta _{2}\varphi ^{(0)}) &=&K_{0}^{\prime }(\vartheta
_{2}\varphi ^{(0)})-\varphi ^{(0)\prime }K^{1}+\vartheta _{0}K_{2}^{\prime
,\ast }\varphi ^{(0)}+  \notag \\
&&+\vartheta _{1}K_{1}^{\prime ,\ast }\varphi ^{(0)}+\vartheta
_{2}K_{0}^{\prime ,\ast }\varphi ^{(0)}+K_{1}^{\prime }\vartheta _{1}\varphi
^{(0)}+K_{2}^{\prime }\vartheta _{0}\varphi ^{(0)}.  \notag
\end{eqnarray}%
Based now on the analytical expressions for actions $\vartheta _{j}:\varphi
^{(0)}\rightarrow \vartheta _{j}\varphi ^{(0)},$ $j\in
\mathbb{Z}
_{+},$ one can easily retrieve them in operator form from the expansion
\begin{equation}
\vartheta =\vartheta _{0}+\mu \vartheta _{1}+\mu ^{2}\vartheta _{2}+...,
\label{D41}
\end{equation}%
if to put at the end of calculations $\mu =1.$ Similarly one can also
construct the second Poissonian operator $\eta :T^{\ast
}(M_{(N)})\rightarrow T(M_{(N)})$ for the nonlinear dynamical system (\ref%
{D1}).

Resuming up all this analysis described above, we can formulate the
following proposition.

\begin{proposition}
\label{Pr_7} Let a nonlinear dynamical system (\ref{D1}) on a discrete
manifold $M_{(N)}$ allow both a nontrivial symmetric solution $\varphi \in $
$T^{\ast }(M_{(N)})$ to the Lax equation (\ref{D9}) in the asymptotical as $%
\lambda \rightarrow \infty $ \ form (\ref{D31}), generating an infinite
hierarchy of nontrivial functionally independent conservation laws (\ref{D33}%
), and compatible nonsymmetric solutions $\psi $ and $\ \phi \in T^{\ast
}(M_{(N)})$ to the N\"{o}ther equations (\ref{D14} and (\ref{D17}),
respectively. Then this dynamical system is a Lax type integrable
bi-Hamiltonian flow on $M_{(N)}$ with respect to two compatible Poissonian
structures $\vartheta ,\eta :T^{\ast }(M_{(N)})\rightarrow T(M_{(N)}),$
whose adjoint Lax type representation%
\begin{equation}
d\Lambda /dt=[\Lambda ,K^{\prime ,\ast }],  \label{D42}
\end{equation}%
where $\Lambda :=\vartheta ^{-1}\eta $ is the so-called recursion operator,
can be transformed, owing to the gradient relationship (\ref{D30}), to the
standard discrete Lax type form
\begin{equation}
dl_{n}/dt=[p_{n}(l),l_{n}]+(D_{n}p_{n}(l))l_{n}  \label{D43}
\end{equation}%
for some matrix $\ p_{n}(l)\in End%
\mathbb{C}
^{r}$ describing the related to (\ref{D20}) temporal evolution
\begin{equation}
df_{n}/dt=p_{n}(l)f_{n}  \label{D44}
\end{equation}%
for $f\in l_{\infty }(\mathbb{Z};%
\mathbb{C}
^{r}).$
\end{proposition}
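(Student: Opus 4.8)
The plan is to assemble the statement from three structural facts the scheme has already prepared: the bi-Hamiltonian representation, the adjoint Lax equation for the recursion operator, and the reduction of the latter to a discrete matrix Lax form through the gradient relationship. First I would read off the bi-Hamiltonian structure directly from Lemma~\ref{Lm_4}. Applied to the solution $\psi$ of the N\"other equation (\ref{D14}), the lemma yields the Poissonian operator $\vartheta=\psi^{\prime}-\psi^{\prime\ast}$, the Hamiltonian $H_{\vartheta}=(\psi,K)-\mathcal{L}_{\psi}$, and the representation $K=-\vartheta\,\mathrm{grad}\,H_{\vartheta}$; applied to the second solution $\phi$ of (\ref{D17}) it gives $\eta=\phi^{\prime}-\phi^{\prime\ast}$, $H_{\eta}=(\phi,K)-\mathcal{L}_{\phi}$ and $K=-\eta\,\mathrm{grad}\,H_{\eta}$, which together constitute the bi-Hamiltonian identity (\ref{D18}). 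The hypothesis that $\psi$ and $\phi$ are \emph{compatible} is exactly the requirement that $\vartheta$ and $\eta$ be compatible Poissonian structures in the sense of Definition~\ref{Df_5}, so that $\lambda\vartheta+\mu\eta$ is Poissonian for all $\lambda,\mu\in\mathbb{R}$.

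Second, I would obtain the adjoint Lax representation (\ref{D42}) for the recursion operator $\Lambda=\vartheta^{-1}\eta$ by a direct computation from the N\"otherian identities. Being the structures of a Hamiltonian flow, both $\vartheta$ and $\eta$ are N\"otherian for (\ref{D1}), so (\ref{D5}) gives $\eta^{\prime}K=\eta K^{\prime,\ast}+K^{\prime}\eta$, while the inverse $\Omega:=\vartheta^{-1}$ obeys the dual identity (\ref{D6}), $\Omega^{\prime}K=-\Omega K^{\prime}-K^{\prime,\ast}\Omega$. Differentiating $\Lambda=\Omega\eta$ along $dw/dt=K$ and substituting,
\[
d\Lambda/dt=(\Omega^{\prime}K)\eta+\Omega(\eta^{\prime}K)=(-\Omega K^{\prime}-K^{\prime,\ast}\Omega)\eta+\Omega(\eta K^{\prime,\ast}+K^{\prime}\eta)=\Lambda K^{\prime,\ast}-K^{\prime,\ast}\Lambda,
\]
the two $\Omega K^{\prime}\eta$ contributions cancelling, which is precisely (\ref{D42}).

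The third and hardest step is the passage from the abstract operator equation (\ref{D42}) to the concrete discrete matrix Lax form (\ref{D43}). Here I would invoke the gradient relationship (\ref{D30}), which reads $\eta\,\bar{\varphi}(\lambda)=z(\lambda)\,\vartheta\,\bar{\varphi}(\lambda)$, i.e.\ $\Lambda\,\bar{\varphi}(\lambda)=z(\lambda)\,\bar{\varphi}(\lambda)$: the gradient covector $\bar{\varphi}(\lambda)$ is an eigen-covector of $\Lambda$. Coupling this with the explicit monodromy realization (\ref{D29a}), $\bar{\varphi}_{n}(\lambda)=\mathrm{tr}\,l^{\prime,\ast}(S_{n}(\lambda)l_{n}^{-1})$, and the Novikov--Lax relation (\ref{D25}), $S_{n+1}l_{n}=l_{n}S_{n}$, the deformation of the monodromy matrix preserving the spectral invariants generated by $\mathrm{tr}\,S_{n}(\lambda)$ (cf.\ (\ref{D28})) must be the isospectral flow $dS_{n}/dt=[p_{n}(l),S_{n}]$ for some periodic $p_{n}(l)\in\mathrm{End}\,\mathbb{C}^{r}$. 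Positing the companion temporal equation (\ref{D44}), $df_{n}/dt=p_{n}(l)f_{n}$, and imposing its compatibility with the spatial equation (\ref{D20}), $f_{n+1}=l_{n}f_{n}$, forces $p_{n+1}l_{n}=dl_{n}/dt+l_{n}p_{n}$; writing $p_{n+1}=\Delta p_{n}=(D_{n}+1)p_{n}$ via (\ref{D29}) then yields $dl_{n}/dt=[p_{n}(l),l_{n}]+(D_{n}p_{n}(l))l_{n}$, which is exactly (\ref{D43}).

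The genuine obstacle is the explicit construction of $p_{n}(l)$, equivalently the inversion of the gradient map so as to recover a \emph{local} matrix evolution from the covector eigenrelation $\Lambda\bar{\varphi}=z\bar{\varphi}$. I would carry this out order by order as $\lambda\to\infty$: expand $\bar{\varphi}(\lambda)$ through (\ref{D31})--(\ref{D32}), treat $z(\lambda)$ as a meromorphic symbol, expand the unknown $p_{n}$ in powers of $\lambda$, and match coefficients. The dispersion function $\omega(t;\lambda)$ of (\ref{D31}) fixes the leading behaviour of $z(\lambda)$, and each lower-order slice of the eigenrelation determines the next coefficient of $p_{n}$ through the trace formula (\ref{D29a}). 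What makes the reconstruction close is that the gradient relationship (\ref{D30}) is precisely the holonomic compatibility condition guaranteeing that this formal functional recursion is the shadow of a genuine matrix recursion carried by $l_{n}[w;\lambda]$; once $p_{n}(l)$ is assembled and resummed, (\ref{D43})--(\ref{D44}) furnish the required discrete Lax representation, completing the proof.
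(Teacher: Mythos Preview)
Your proposal is correct and follows the same structural route as the paper. Note that the paper does not supply a separate proof block for this proposition: it is introduced with ``Resuming up all this analysis described above, we can formulate the following proposition,'' so its justification is the preceding discussion (Lemma~\ref{Lm_4} for the bi-Hamiltonian pair, the N\"otherian identities (\ref{D5})--(\ref{D6}), and the gradient relationship (\ref{D30})), together with the paragraph immediately after, which defers the actual retrieval of the matrix Lax form to ``suitably derived algebraic relationships'' from the cited gradient-holonomic literature. Your write-up reproduces this skeleton and in fact adds detail the paper omits: you give the explicit N\"otherian computation yielding (\ref{D42}) and the compatibility calculation $dl_{n}/dt=p_{n+1}l_{n}-l_{n}p_{n}=(D_{n}p_{n})l_{n}+[p_{n},l_{n}]$ yielding (\ref{D43}), whereas the paper simply asserts both. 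Your acknowledged ``genuine obstacle''---the explicit construction of $p_{n}(l)$ from the eigenrelation $\Lambda\bar{\varphi}=z\bar{\varphi}$---is precisely the step the paper also leaves to the external algorithm, so you are no less complete than the original on that point.
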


\begin{remark}
\label{Re_8} Based on the property that all Hamiltonian flows (\ref{D36})
commute to each other and to dynamical system (\ref{D1}) and using the fact
that they possess \ the same Poissonian and compatible $(\vartheta ,\eta )$%
-pair, the analytical algorithm described above can be equally applied to
any other flow, commuting with (\ref{D1}).
\end{remark}

Concerning the discrete linear Lax type problem (\ref{D20}), it can be
constructed by means of the gradient-holonomic algorithm, devised in \cite%
{PM,HPP,BPS} for studying the integrability of nonlinear dynamical systems
on functional manifolds. Namely, making use of the preliminary found
analytical expressions for the related compatible Poissonian structures $%
\vartheta ,\eta :T^{\ast }(M_{(N)})\rightarrow T(M_{(N)})$ on the manifold $%
M_{(N)}$ and using the fact that the recursion operator $\Lambda :=\vartheta
^{-1}\eta :$ $T^{\ast }(M_{(N)})\rightarrow T^{\ast }(M_{(N)})$ satisfies
the dual Lax type commutator equality (\ref{D42}), one can retrieve the
standard Lax type representation for \ it by means of suitably derived
algebraic relationships. As a corollary of Proposition \ \ref{Pr_7} \ one
can also claim that the existence of a nontrivial asymptotical as $\lambda
\rightarrow \infty $ \ solution to the Lax equation (\ref{D9}) can serve as
an effective Lax type integrability criterion for a given nonlinear
dynamical system (\ref{D1}) on the manifold $M_{(N)}.$

\section{ The Bogoyavlensky-Novikov finite-dimensional reduction}

Assume that our dynamical system (\ref{D1}) on the periodic manifold $%
M_{(N)} $ is Lax type integrable and possesses two compatible Poissonian
structures $\vartheta ,\eta :T^{\ast }(M_{(N)})\rightarrow T(M_{(N)}).$Thus,
we have the nonlinear finite-dimensional dynamical system%
\begin{equation}
dw_{n}/dt:=K_{n}[w]=-\vartheta \text{ }\mathrm{grad}\text{ }H_{n}[w]
\label{D45}
\end{equation}%
for indices $n\in
\mathbb{Z}
_{N}$ owing to its $N$-periodicity. The finite dimensional dynamical system (%
\ref{D45}) can be equivalently considered as that on the finite-dimensional
space $M_{(N)}\simeq (%
\mathbb{C}
^{m})^{N}$ parameterized by any integer index $n\in
\mathbb{Z}
_{N},$ and whose Liouville integrability is of the next our analysis. To
proceed with studying the flow (\ref{D45}) on the manifold $M_{(N)},$ we
will make use of the Bogoyavlensky-Novikov \cite{BN,No} reduction scheme
\cite{No,Pr,PM,Bl}.

Let $\Lambda (M_{(N)}):=\underset{j\in
\mathbb{Z}
_{+}}{\oplus }\Lambda ^{j}(M_{(N)})$ be the standard finitely generated
Grassmann algebra \cite{Ar,PM,BPS} of differential forms on the manifold $%
M_{(N)}.$ Then the following differential complex
\begin{equation}
\Lambda ^{0}(M_{(N)})\overset{d}{\rightarrow }\Lambda ^{1}(M_{(N)})\overset{d%
}{\rightarrow ...}\overset{d}{\rightarrow }\Lambda ^{j}(M_{(N)})\overset{d}{%
\rightarrow }\Lambda ^{j+1}(M_{(N)})\overset{d}{\rightarrow }....,
\label{D46}
\end{equation}%
where $d:\Lambda (M_{(N)})\rightarrow \Lambda (M_{(N)})$ is the external
differentiation, is \ finite and exact. \ Since the discrete `derivative' $%
D_{n}:=\Delta -1$ commutes with the differentiation $d:\Lambda
(M_{(N)})\rightarrow \Lambda (M_{(N)}),$ $[D_{n},d]=0$ for all $n\in
\mathbb{Z}
_{N},$ and for any element $a\in \Lambda ^{0}(M_{(N)})$
\begin{equation}
\mathrm{grad}(\sum_{n\in
\mathbb{Z}
_{N}}D_{n}a_{n}[w])=0,  \label{D47}
\end{equation}%
one can formulate the following Gelfand-Dikiy type \cite{GD} lemma.

\begin{lemma}
\label{Lm_8} Let $\mathcal{L}[w]\in \Lambda ^{0}(M_{(N)})$ be a Frechet
smooth local Lagrangian functional on the manifold $M_{(N)}.$ Then there
exists a differential 1-form $\alpha ^{(1)}\in \Lambda ^{1}(M_{(N)}),$ such
that the equality%
\begin{equation}
d\mathcal{L}_{n}[w]=<\mathrm{grad}\text{ }\mathcal{L}_{n}[w],dw_{n}>+D_{n}%
\alpha _{n}^{(1)}[w]  \label{D48}
\end{equation}%
holds for all $n\in
\mathbb{Z}
_{N}.$
\end{lemma}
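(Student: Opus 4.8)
The plan is to establish (\ref{D48}) by performing a discrete summation by parts (Abel summation) directly at the level of $1$-forms, which separates out the gradient term and leaves an explicit total-difference remainder. Since $\mathcal{L}_n$ is a local density, it depends on only finitely many shifts $w_{n+j}$ of the field, so by the chain rule its Frechet differential expands as
\[
d\mathcal{L}_n=\mathcal{L}_n^{\prime}[w]\cdot dw=\sum_{j}\Big(\frac{\partial \mathcal{L}_n}{\partial w_{n+j}}\Big)\,\Delta^{j}dw_n ,
\]
a finite sum, where $\Delta^{j}dw_n=dw_{n+j}$ by the commutation $[D_n,d]=0$. Unwinding the definition $\mathrm{grad}\,\mathcal{L}=\mathcal{L}^{\prime,\ast}\cdot 1$ through the pairing $(\cdot,\cdot)$ and reindexing the lattice sum by $n\mapsto n-j$, one identifies the $n$-th component of the gradient with the discrete Euler--Lagrange expression
\[
\mathrm{grad}\,\mathcal{L}_n[w]=\sum_{j}\Delta^{-j}\Big(\frac{\partial \mathcal{L}_n}{\partial w_{n+j}}\Big),
\]
obtained from $d\mathcal{L}_n$ by transporting each shift $\Delta^{j}$ off the variation $dw_n$ and onto its coefficient. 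The object to be controlled is therefore the termwise discrepancy $\sum_j\big[(\partial \mathcal{L}_n/\partial w_{n+j})\,\Delta^{j}dw_n-\Delta^{-j}(\partial \mathcal{L}_n/\partial w_{n+j})\,dw_n\big]$, and the claim is exactly that it is a total discrete difference.

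The key algebraic tool is the telescoping identity, valid for any $j\ge 1$ and any lattice quantities $a_n,b_n$,
\[
a_n\,\Delta^{j}b_n-(\Delta^{-j}a_n)\,b_n=D_n\Big[\sum_{k=1}^{j}(\Delta^{-k}a_n)(\Delta^{j-k}b_n)\Big],
\]
which I would verify by writing the left-hand side as $\sum_{k=1}^{j}\big[(\Delta^{-(k-1)}a_n)(\Delta^{j-k+1}b_n)-(\Delta^{-k}a_n)(\Delta^{j-k}b_n)\big]$ and observing that each summand is $\Delta\big[(\Delta^{-k}a_n)(\Delta^{j-k}b_n)\big]-(\Delta^{-k}a_n)(\Delta^{j-k}b_n)=D_n[\,\cdot\,]$; the case $j\le -1$ is symmetric, with the two shifts interchanged. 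Applying this with $a_n=\partial \mathcal{L}_n/\partial w_{n+j}$ and $b_n=dw_n$ to each term of the discrepancy and summing over the finitely many $j$ produces the sought $1$-form
\[
\alpha_n^{(1)}[w]:=\sum_{j}\ \sum_{k=1}^{j}\Big(\Delta^{-k}\frac{\partial \mathcal{L}_n}{\partial w_{n+j}}\Big)\big(\Delta^{j-k}dw_n\big)\in\Lambda^{1}(M_{(N)}),
\]
whence $d\mathcal{L}_n-(\mathrm{grad}\,\mathcal{L}_n,dw_n)=D_n\alpha_n^{(1)}$, which is (\ref{D48}). Since only finitely many $j$ contribute, $\alpha_n^{(1)}$ is a genuine local $1$-form on $M_{(N)}$.

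Finally I would record the consistency check that anchors the construction: summing (\ref{D48}) over $n\in\mathbb{Z}_N$ annihilates the last term by periodicity, since $\sum_n D_n\alpha_n^{(1)}$ telescopes to zero, and reduces the identity to the global gradient identity $d\mathcal{L}=(\mathrm{grad}\,\mathcal{L},dw)$; this matches (\ref{D47}), which asserts that total differences carry no gradient, and relies on $[D_n,d]=0$ to pass $d$ through the discrete derivative. I expect the only genuine obstacle to be the bookkeeping of the shift indices in the summation by parts — tracking both positive and negative $j$ and checking that the telescoping remainder assembles into a single application of $D_n$ — after which the existence and locality of $\alpha_n^{(1)}$ follow at once from the finiteness of the expansion of the local density $\mathcal{L}_n$.
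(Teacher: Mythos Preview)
Your proposal is correct and follows essentially the same route as the paper: expand $d\mathcal{L}_n$ via the chain rule, move each shift $\Delta^{j}$ from $dw_n$ onto its coefficient by a discrete summation-by-parts (Abel telescoping), and collect the remainder into an explicit local $1$-form $\alpha_n^{(1)}$. The paper's proof is the same computation written more compactly (with the telescoping left implicit and only nonnegative shifts $j=0,\dots,N-1$ considered), arriving at the same gradient formula $\mathrm{grad}\,\mathcal{L}_n=\sum_j\Delta^{-j}\partial\mathcal{L}_n/\partial w_{n+j}$ and an equivalent $\alpha_n^{(1)}$.
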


\begin{proof}
One can easily observe that
\begin{eqnarray}
d\mathcal{L}_{n}[w] &=&\underset{j=0}{\overset{N-1}{\sum }}<\frac{\partial
\mathcal{L}_{n}[w]}{\partial w_{n+j}},dw_{n+j}>=\underset{j=0}{\overset{N-1}{%
\sum }}<\frac{\partial \mathcal{L}_{n}[w]}{\partial w_{n+j}},\Delta
^{j}dw_{n}>=  \label{D49} \\
&=&<\underset{j=0}{\overset{N-1}{\sum }}\Delta ^{-j}\frac{\partial \mathcal{L%
}_{n}[w]}{\partial w_{n+j}},dw_{n}>+D_{n}\left( \underset{j=0}{\overset{N-1}{%
\sum }}<p_{j},dw_{n+j}>\right) ,  \notag
\end{eqnarray}%
where, by definition,
\begin{equation}
p_{k}:=\underset{j=0}{\overset{N-1}{\sum }}\Delta ^{-j}\frac{\partial
\mathcal{L}_{n}[w]}{\partial w_{n+j+k+1}}  \label{D50}
\end{equation}%
for $k=\overline{0,N-1}.$ \ Having denoted the expression
\begin{equation}
\mathrm{grad}\text{ }\mathcal{L}_{n}[w]:=\underset{j=0}{\overset{N-1}{\sum }}%
\Delta ^{-j}\frac{\partial \mathcal{L}_{n}[w]}{\partial w_{n+j}},
\label{D51}
\end{equation}%
one obtains the result (\ref{D48}), where
\begin{equation}
\alpha _{n}^{(1)}[w]:=\underset{j=0}{\overset{N-1}{\sum }}<p_{j},dw_{n+j}>
\label{D52}
\end{equation}%
is the corresponding differential 1-form on the manifold $M_{(N)}.$
\end{proof}

Applying now the $d$-differentiation to expression (\ref{D48}) we obtain
that
\begin{equation}
-D_{n}\omega _{n}^{(2)}[w]=<d\text{ }\mathrm{grad}\text{ }\mathcal{L}%
_{n}[w],\wedge dw_{n}>  \label{D53}
\end{equation}%
for any $n\in
\mathbb{Z}
,$ where the 2-form
\begin{equation}
\omega ^{(2)}[w]:=d\alpha ^{(1)}[w]  \label{D54}
\end{equation}%
is nondegenerate on $M_{(N)},$ if the Hessian $\partial _{n}^{2}\mathcal{L}%
[w]/\partial ^{2}w_{n}$ is also nondegenerate.

Assume now that the submanifold
\begin{equation}
\bar{M}_{(N)}:=\left\{ \mathrm{grad}\ \mathcal{L}_{n}^{(\bar{N})}[w]=0;\text{
}w\in M_{(N)}\right\} ,  \label{D55}
\end{equation}%
where, by definition, the Lagrangian functional
\begin{equation}
\mathcal{L}^{(\bar{N})}:=-\gamma _{\bar{N}}+\underset{j=0}{\overset{\bar{N}-1%
}{\sum }}c_{j}\gamma _{j},  \label{D56}
\end{equation}%
with $\gamma _{j}\in $ $\mathcal{D}(M),$ $j=\overline{0,\bar{N}-1},$ for
some $\bar{N}\in \mathbb{Z}_{+},$ being the suitable nontrivial conservation
laws for the dynamical system (\ref{D1}), constructed before, and $c_{j}\in
\mathbb{C},$ $j=\overline{0,\bar{N}-1},$ being some arbitrary but fixed
constants. As a result of (\ref{D55}) and (\ref{D53}) we obtain that closed
2-form $\omega ^{(2)}\in \Lambda ^{2}(\bar{M}_{(N)})$ is invariant with
respect to the index $n\in
\mathbb{Z}
_{N}$ on the manifold $\bar{M}_{(N)}.$ Moreover, the submanifold (\ref{D55})
is also invariant both with respect to the index $n\in \mathbb{%
\mathbb{Z}
}_{N}$ and the evolution parameter $t\in \mathbb{R}.$ Really, for any $n\in
\mathbb{%
\mathbb{Z}
}_{N}$ the Lie derivative
\begin{equation}
L_{K}\mathrm{grad}\ \mathcal{L}^{(\bar{N})}=(\mathrm{grad}\ \mathcal{L}^{(%
\bar{N})})^{\prime }K+K^{\prime ,\ast }(\mathrm{grad}\ \mathcal{L}^{(\bar{N}%
)})=0,  \label{D57}
\end{equation}%
since the functional $\mathcal{L}^{(\bar{N})}\in \mathcal{D}(M_{(N)})$ is a
sum of conservation laws for the dynamical system (\ref{D1}), \ whose
gradients satisfies the Lax condition (\ref{D9}). \ Moreover, it is easy to
see that if the Lie derivative $L_{K}$ $\mathrm{grad}$ $\mathcal{L}_{n}^{(%
\bar{N})}[w]=0,n\in \mathbb{%
\mathbb{Z}
}_{N},$ at $t=0,$ then $\mathrm{grad}\ \mathcal{L}_{n}^{(\bar{N})}[w]=0$ for
all $t\in \mathbb{R}$ and $n\in \mathbb{%
\mathbb{Z}
}_{N}.$ Thus, the Bogoyavlensky-Novikov reduction of the dynamical system (%
\ref{D1}) upon the invariant submanifold $\bar{M}_{(N)}$ \ is defined
completely invariantly.

Now a question arises: \ how are related the dynamical system (\ref{D1}),
naturally constrained to live on the submanifold $M_{(N)},$ and the
dynamical system (\ref{D1}), reduced on the finite dimensional submanifold $%
\bar{M}_{(N)}\subset M_{(N)}.$ To analyze this reduction we will consider
the following equality:%
\begin{equation}
<\mathrm{grad}\ \mathcal{L}_{n}^{(\bar{N}%
)}[w],K_{n}[w]>=-D_{n}h_{n}^{(t)}[w],  \label{D59}
\end{equation}%
for some local functional $h^{(t)}[w]\in \Lambda ^{0}(M),$ following from
the conditions (\ref{D47}) and (\ref{D9}):%
\begin{eqnarray}
\mathrm{grad}\ &<&\mathrm{grad}\ \mathcal{L}_{n}^{(\bar{N})}[w],K_{n}[w]>=(%
\mathrm{grad}\ \mathcal{L}_{n}^{(\bar{N})}[w])^{\prime ,\ast
}K_{n}[w]+K_{n}^{\prime ,\ast }[w]\mathrm{grad}\ \mathcal{L}_{n}^{(\bar{N}%
)}[w]=  \label{D60} \\
&=&(\mathrm{grad}\ \mathcal{L}_{n}^{(\bar{N})}[w])^{\prime
}K_{n}[w]+K_{n}^{\prime ,\ast }[w]\mathrm{grad}\ \mathcal{L}_{n}^{(\bar{N}%
)}[w]=L_{K}\mathrm{grad}\ \mathcal{L}_{n}^{(\bar{N})}[w]=0,  \notag
\end{eqnarray}%
giving rise to (\ref{D59}). \ Since on the submanifold $\bar{M}_{(N)}$ the
gradient $\mathrm{grad}$ $\mathcal{L}_{n}^{(\bar{N})}[w]=0$ for all $n\in
\mathbb{%
\mathbb{Z}
}_{N},$ we obtain from (\ref{D59}) that the local functional $h^{(t)}[w]\in
\Lambda ^{0}(\bar{M}_{(N)})$ does not depend on index $n\in \mathbb{%
\mathbb{Z}
}_{N}.$

The properties of the manifold $\bar{M}_{(N)},$ described above, make it
possible to consider it as a symplectic manifold endowed with the symplectic
structure $\omega ^{(2)}\in \Lambda ^{2}(\bar{M}_{(N)}),$ given by
expressions (\ref{D52}) and (\ref{D54}). From this point of view we can
proceed to studying the integrability properties of the dynamical system (%
\ref{D1}) reduced upon the invariant finite-dimensional manifold $\bar{M}%
_{(N)}\subset M_{(N)}.$

First, we observe that the vector field $d/dt$ on $\bar{M}_{(N)}$ \ is
canonically Hamiltonian \cite{AM,Ar,No} with respect to the symplectic
structure $\omega ^{(2)}\in \Lambda ^{2}(\bar{M}_{(N)}):$%
\begin{equation}
-i_{\frac{d}{dt}}\omega ^{(2)}(w,p)=dh^{(t)}(w,p),  \label{D61}
\end{equation}%
where $h^{(t)}(w,p):=$ $h^{(t)}(w),\omega ^{(2)}(w,p):=\omega ^{(2)}[w]$ and
$(w,p)^{\intercal }\in \bar{M}_{(N)}$ are canonical variables induced on the
manifold $\bar{M}_{(N)}$ by the Liouville 1-form (\ref{D52}). \ Really, from
expression (\ref{D59}) one obtains that
\begin{equation*}
di_{\frac{d}{dt}}<\mathrm{grad}\ \mathcal{L}_{n}^{(\bar{N}%
)}[w],dw_{n}>=-D_{n}dh_{n}^{(t)}[w],
\end{equation*}%
which, being supplemented with the identity (\ref{D53}) in the form
\begin{equation*}
i_{\frac{d}{dt}}d<\mathrm{grad}\ \mathcal{L}_{n}^{(\bar{N}%
)}[w],dw_{n}>=-D_{n}i_{\frac{d}{dt}}\omega _{n}^{(2)}[w],
\end{equation*}%
entails the following:

\begin{equation}
\frac{d}{dt}<\mathrm{grad}\ \mathcal{L}_{n}^{(\bar{N}%
)}[w],dw_{n}>=-D_{n}(dh_{n}^{(t)}[w]+i_{\frac{d}{dt}}\omega _{n}^{(2)}[w]),
\label{D62}
\end{equation}%
Since $\mathrm{grad}$ $\mathcal{L}^{(\bar{N})}[w]=0=L_{K}$ $\mathrm{grad}$ $%
\mathcal{L}[w]$ \ identically \ \ on $\ \bar{M}_{(N)},$ \ from (\ref{D62})
one obtains \ the result (\ref{D61}).

The same one can claim subject to any of Hamiltonian systems (\ref{D36}),
commuting with (\ref{D1}) on the manifold $M.$ Moreover, owing to the
functional independence of invariants $\gamma _{j}\in $ $\mathcal{D}%
(M_{(N)}),$ $j=\overline{0,\bar{N}-1},$ entering the Lagrangian functional (%
\ref{D56}), we can construct the set of functionally independent functions $%
h^{(j)}\in \mathcal{D}(\bar{M}_{(N)}),$ $j=\overline{0,\bar{N}-1},$ as
follows:
\begin{equation}
<\mathrm{grad}\ \mathcal{L}_{n}^{(\bar{N})}[w],\vartheta _{n}\text{ }\mathrm{%
grad}\ \mathcal{\gamma }_{j,n}[w]>=D_{n}h_{n}^{(j)}[w],  \label{D63}
\end{equation}%
It is easy to check that these functions $h^{(j)}\in \mathcal{D}(\bar{M}%
_{(N)}),j=\overline{0,\bar{N}-1},$ are invariant with respect to indices $%
n\in \mathbb{Z}_{N}$ and commuting both to each other and to the Hamiltonian
function $h^{(t)}\in \mathcal{D}(\bar{M}_{(N)})$ with respect to the
symplectic structure $\omega ^{(2)}\in \Lambda ^{2}(\bar{M}_{(N)}).$ Thus,
if the dimension $\dim \bar{M}_{(N)}=2\bar{N},$ the discrete dynamical
system \ (\ref{D1}) reduced upon the finite-dimensional submanifold $\bar{M}%
_{(N)}\subset M_{(N)}$ will be Liouville integrable. If the set of
conservation laws $\gamma _{j}\in \mathcal{D}(M_{(N)}),$ $j=\overline{0,\bar{%
N}-1},$ proves to be functionally dependent on $M_{(N)},$ the described
scheme should be modified by means of using the Dirac reduction technique
\cite{AM,Bl,PM} for regular finding the symplectic structure $\bar{\omega}%
^{(2)}[w]\in \Lambda ^{2}(\bar{M}_{(N)})$ on invariant nonsingular
submanifolds.

\section{Example 1: the differential-difference nonlinear Schr\"{o}dinger
dynamical system and its integrability}

The mentioned before discrete nonlinear Schr\"{o}dinger dynamical system \ (%
\ref{D2}) is defined on the periodic manifold $M_{(N)}\subset l_{\infty }(%
\mathbb{Z};\mathbb{C}^{2}).$ Its Lax type integrability was stated in \cite%
{AL,MS,BP} making use of the simplest discretization of the standard
Zakharov-Shabat spectral problem for the well-known nonlinear Schr\"{o}%
dinger equation. In this Section \ we will demonstrate the application of
the gradient-holonomic integrability analysis, described above, to this
discrete nonlinear Schr\"{o}dinger dynamical system \ (\ref{D2}). First, we
will show the existence of an infinite hierarchy of functionally independent
conservation laws, having solved the determining Lax equation (\ref{D9}) in
the asymptotical form (\ref{D31}). The following lemma holds.

\begin{lemma}
\label{Lm_8a}\ \ The functional expression%
\begin{equation}
\varphi _{n}:=\binom{1}{a_{n}(\lambda )}\exp [it(2-\lambda -\lambda
^{-1})]\prod\limits_{j=1}^{n}\sigma _{j}(\lambda ),  \label{D64}
\end{equation}%
is an asymptotical, as $\ \ \lambda \rightarrow \infty ,$ \ solution to the
determining Lax equation
\begin{equation}
d\varphi _{n}/dt+K_{n}^{\prime ,\ast }\varphi _{n}=0  \label{D65a}
\end{equation}%
for all $n\in \mathbb{Z}_{N}$ with the operator $K^{\prime ,\ast }:T^{\ast
}(M_{(N)})\rightarrow T^{\ast }(M_{(N)})$ of the form:
\begin{equation}
K_{n}^{\prime ,\ast }=\left(
\begin{array}{cc}
\begin{array}{c}
i\Delta ^{-1}D_{n}^{2}-iv_{n}(u_{n+1}+u_{n-1})- \\
-i(\Delta +\Delta ^{-1})\cdot v_{n}u_{n}%
\end{array}
& iv_{n}(v_{n+1}+v_{n-1}) \\
-iu_{n}(u_{n+1}+u_{n-1}) &
\begin{array}{c}
-i\Delta ^{-1}D_{n}^{2}+iu_{n}(v_{n+1}+v_{n-1})+ \\
+i(\Delta +\Delta ^{-1})\cdot v_{n}u_{n}%
\end{array}%
\end{array}%
\right) ,  \label{D66}
\end{equation}%
where, by definition,%
\begin{eqnarray}
\sigma _{n}(\lambda ) &\simeq &\frac{\lambda }{h_{n}[u,v]}%
(1-\sum\limits_{s\in \mathbb{Z}_{+}}\sigma _{n+}^{(s)}[u,v]\lambda ^{-s-1}),
\label{D65} \\
a_{n}(\lambda ) &\simeq &\sum\limits_{s\in \mathbb{Z}_{+}}a_{n}^{(s)}[u,v]%
\lambda ^{-s}  \notag
\end{eqnarray}%
are the corresponding asymptotical expansions.
\end{lemma}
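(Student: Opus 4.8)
The plan is to verify the ansatz (\ref{D64}) by direct substitution into the determining equation (\ref{D65a}), reducing the operator equation to a pair of scalar recurrences whose solvability I establish order by order in $\lambda^{-1}$. First I would record the adjoint linearization. Computing the Frechet derivative of the right-hand side of (\ref{D2}) with respect to $(u,v)$ and taking its adjoint relative to the bilinear pairing $(\cdot,\cdot)$ — using that every multiplication operator is self-adjoint while the shift obeys $\Delta^{\ast}=\Delta^{-1}$ — should yield exactly the matrix (\ref{D66}). Here it is convenient to note the identity $\Delta^{-1}D_n^{2}=\Delta-2+\Delta^{-1}$, so that the diagonal of (\ref{D66}) contains the symmetric second difference coming from the dispersive part of (\ref{D2}), and the operator ordering in the terms $(\Delta+\Delta^{-1})\,v_nu_n$ is precisely what the adjunction produces.

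Next I would substitute $\varphi_n=\binom{1}{a_n(\lambda)}\Phi_n$ with $\Phi_n:=\exp[it(2-\lambda-\lambda^{-1})]\prod_{j=1}^{n}\sigma_j(\lambda)$, and use the elementary facts $\Delta\Phi_n=\sigma_{n+1}\Phi_n$ and $\Delta^{-1}\Phi_n=\sigma_n^{-1}\Phi_n$ to strip the operator action in (\ref{D66}) down to purely algebraic relations among $\sigma_{n+1},\sigma_n,a_{n\pm1},a_n$ and the fields $u_n,v_n$. The explicit time dependence contributes the factor $i(2-\lambda-\lambda^{-1})$, whereas the implicit dependence through the flow $w(t)$ contributes $\sum_{j=1}^{n}\dot\sigma_j/\sigma_j$ and $\dot a_n$; a short estimate using the expansions (\ref{D65}) shows these flow derivatives are $O(1)$ in $\lambda$, hence enter only at subleading orders. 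Dividing the first component by $\Phi_n$ and collecting the free part (equivalently, linearizing (\ref{D66}) at $w=0$, where $\sigma_n\to\lambda$) fixes the dispersion $\omega=2-\lambda-\lambda^{-1}$ exactly, and collecting the $O(\lambda)$ terms of the full first component then pins down the leading amplitude $h_n=1-u_nv_n$ of (\ref{D65}); the second component in turn fixes the leading behaviour of $a_n$.

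Finally I would insert the expansions (\ref{D65}) and match powers of $\lambda$. At each order $\lambda^{-s}$ the two components become linear nonuniform equations in the unknowns $\sigma_{n+}^{(s)}$ and $a_n^{(s)}$, with the already-determined lower-order coefficients together with the $O(1)$ flow-derivative terms playing the role of inhomogeneities. I would check that the coefficient of the top unknown is nonvanishing, so the recurrence closes and determines all $\sigma_{n+}^{(s)}$ and $a_n^{(s)}$ uniquely. This produces a solution of (\ref{D65a}) to all orders in $\lambda^{-1}$; by Lemma~\ref{Lm_3} and Proposition~\ref{Pr_5} the accompanying functional $\gamma(\lambda)=\sum_{n}\ln(\lambda^{-\tilde\sigma}\sigma_n)$ then generates the asserted hierarchy of conservation laws.

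The main obstacle is the bookkeeping of this asymptotic matching rather than any single hard estimate: one must track the implicit time derivatives $\dot\sigma_j,\dot a_n$ carefully enough to keep the dispersive ($O(\lambda)$) and slow ($O(1)$) contributions separated, and one must confirm that the order-by-order system is nondegenerate so that no coefficient is left undetermined. Consistency of the entire recurrence is ultimately underwritten by the fact that the transfer-matrix trace $\bar\gamma(\lambda)=\operatorname{tr}S_n(\lambda)$ is a genuine invariant, so that the relation (\ref{D35}) holds and no resonant obstruction to solving for the next coefficient can arise.
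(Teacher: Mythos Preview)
Your plan is correct and follows essentially the same route as the paper: substitute the ansatz (\ref{D64}) into (\ref{D65a}), reduce the two components to scalar functional equations in $\sigma_n(\lambda)$ and $a_n(\lambda)$ (the paper's (\ref{D67}) and (\ref{D67a})), and then match powers of $\lambda$ to solve recursively for $h_n,\ \sigma_n^{(s)},\ a_n^{(s)}$. Your additional remarks---verifying the form of $K_n^{\prime,\ast}$ via $\Delta^\ast=\Delta^{-1}$, isolating the dispersion $\omega=2-\lambda-\lambda^{-1}$ from the leading balance, and checking nondegeneracy of the top coefficient so the recurrence closes---are welcome elaborations, but the paper's own argument is precisely this substitution-and-matching, with the first few coefficients (\ref{D68})--(\ref{D69}) displayed and the general solvability asserted.
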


\begin{proof}
To prove \ this Lemma it is enough to find the corresponding coefficients \
of the asymptotical expansions \ (\ref{D65}). To do \ this we will consider
the following two equations easily obtained from \ (\ref{D65a}), (\ref{D66})
and (\ref{D64}) :%
\begin{equation}
\begin{array}{c}
D_{n}^{-1}\frac{d}{dt}[-\ln h_{n}+\ln (1-\sum\limits_{s\in \mathbb{Z}%
_{+}}\sigma _{n}^{(s)}\lambda ^{-s-1})]+ \\
+i\lambda \lbrack h_{n+1}^{-1}(1-v_{n}u_{n})(1-\sum\limits_{s\in \mathbb{Z}%
_{+}}\sigma _{n}^{(s)}\lambda ^{-s-1})-1]+ \\
+\frac{i}{\lambda }\left[ (1-v_{n-1}u_{n-1})h_{n}(1-\sum\limits_{s\in
\mathbb{Z}_{+}}\sigma _{n}^{(s)}\lambda ^{-s-1})^{-1}-1\right] - \\
-iv_{n}(u_{n+1}+u_{n-1})+iv_{n}(v_{n+1}+v_{n-1})\sum\limits_{s\in \mathbb{Z}%
_{+}}a_{n}^{(s)}\lambda ^{-s}%
\end{array}
\label{D67}
\end{equation}%
and
\begin{equation}
\begin{array}{c}
(\sum\limits_{s\in \mathbb{Z}_{+}}a_{n}^{(s)}\lambda ^{-s})D_{n}^{-1}\frac{d%
}{dt}[-\ln h_{n}+\ln (1-\sum\limits_{s\in \mathbb{Z}_{+}}\sigma
_{n}^{(s)}\lambda ^{-s-1})]+4i(\sum\limits_{s\in \mathbb{Z}%
_{+}}a_{n}^{(s)}\lambda ^{-s})+ \\
+\left[ i\lambda h_{n+1}(v_{n+1}u_{n+1}-1)(\sum\limits_{s\in \mathbb{Z}%
_{+}}a_{n+1}^{(s)}\lambda ^{-s})(\sum\limits_{s\in \mathbb{Z}%
_{+}}a_{n+1}^{(s)}\lambda ^{-s})-\sum\limits_{s\in \mathbb{Z}%
_{+}}a_{n}^{(s)}\lambda ^{-s}\right] + \\
+\frac{i}{\lambda }\left[ (v_{n-1}u_{n-1}-1)(\sum\limits_{s\in \mathbb{Z}%
_{+}}a_{n+1}^{(s)}\lambda ^{-s})h_{n}(1-\sum\limits_{s\in \mathbb{Z}%
_{+}}\sigma _{n}^{(s)}\lambda ^{-s-1})^{-1}-\sum\limits_{s\in \mathbb{Z}%
_{+}}a_{n}^{(s)}\lambda ^{-s}\right] + \\
+\frac{d}{dt}\sum\limits_{s\in \mathbb{Z}_{+}}a_{n}^{(s)}\lambda
^{-s}-iu_{n}(u_{n+1}+u_{n-1})+iu_{n}(v_{n+1}+v_{n-1})\sum\limits_{s\in
\mathbb{Z}_{+}}a_{n}^{(s)}\lambda ^{-s}.%
\end{array}
\label{D67a}
\end{equation}%
Having equated the coefficients of \ (\ref{D67}) at the same degrees of the
parameter $\lambda \in \mathbb{C},$ we obtain step-by-step the functional
expressions for $h_{n},\sigma _{n}^{(s)}$ and $a_{n}^{(s)}$%
\begin{eqnarray}
h_{n} &=&(1-v_{n}u_{n}),a_{n}^{(0)}=0,a_{n}^{(1)}=\beta ,  \label{D68} \\
\sigma _{n}^{(0)}
&=&v_{n-1}u_{n-1}+v_{n-1}u_{n-2}(u_{n}+u_{n-2})-iD_{n}^{2}(\ln h_{n-1})_{t},
\notag \\
\sigma _{n}^{(1)} &=&i\frac{d}{dt}\sigma
_{n-1}^{(0)}+(h_{n-1}h_{n-2}-1)+a_{n-1}^{(1)}v_{n-1}(u_{n}+u_{n-2}),  \notag
\\
a_{n}^{(2)} &=&-3a_{n-1}^{(1)}+i\frac{d}{dt}\sigma
_{n-1}^{(1)}-ia_{n-1}^{(1)}D_{n}^{-1}(\ln h_{n-1})_{t}+  \notag \\
&&+a_{n}^{(1)}\sigma _{n}^{(0)}-u_{n-1}(v_{n}+v_{n-2})a_{n-1}^{(1)},  \notag
\\
dh_{n}/dt &=&iD_{n}(v_{n-1}u_{n}-v_{n}u_{n-1}),...,  \notag
\end{eqnarray}%
for all $n\in \mathbb{Z},s\in \mathbb{Z},$ \ or%
\begin{eqnarray}
\sigma _{n}^{(0)}
&=&v_{n-1}u_{n-1}+v_{n-1}u_{n-2}(u_{n}+u_{n-2})-iD_{n}^{2}(\ln h_{n-1})_{t},
\label{D69} \\
\sigma _{n}^{(1)} &=&i\frac{d}{dt}\sigma
_{n-1}^{(0)}+(1-v_{n-1}u_{n-1})(1-v_{n-2}u_{n-2})+\beta
v_{n-1}(u_{n}+u_{n-2}),...,  \notag
\end{eqnarray}%
and so on. Thus, having stated that the corresponding iterative equations
are solvable for all $s\in \mathbb{Z}_{+},$ we can claim that expression \ (%
\ref{D64}) is a true asymptotical solution to the Lax equation \ (\ref{D65a}%
).
\end{proof}

Recalling now that the expression
\begin{equation}
\gamma (\lambda ):=-\sum_{n=0}^{N-1}\ln h_{n}+\sum_{n=0}^{N-1}\ln
(1-\sum\limits_{s\in \mathbb{Z}_{+}}\sigma _{n}^{(s)}\lambda ^{-s-1})
\label{d70}
\end{equation}%
as $\lambda \rightarrow \infty $ is a generating function of conservation
laws for the dynamical system \ (\ref{D2}), one finds that functionals
\begin{eqnarray}
\bar{\gamma}_{0} &=&\sum_{n=0}^{N-1}\ln (1-v_{n}u_{n}),\gamma
_{0}=-\sum_{n=0}^{N-1}\sigma _{n}^{(0)},  \label{D70} \\
\gamma _{1} &=&-\sum_{n=0}^{N-1}(\sigma _{n}^{(1)}+\frac{1}{2}\sigma
_{n}^{(0)}\sigma _{n}^{(0)}),  \notag \\
\gamma _{2} &=&-\sum_{n=0}^{N-1}(\sigma _{n}^{(2)}+\frac{1}{3}\sigma
_{n}^{(0)}\sigma _{n}^{(0)}\sigma _{n}^{(0)}+\sigma _{n}^{(0)}\sigma
_{n}^{(1)}),...,  \notag
\end{eqnarray}%
and so on, make up an infinite hierarchy of exact conservative quantities
for the discrete nonlinear Schr\"{o}dinger dynamical system \ (\ref{D2}).

Make here some remarks concerning the complete integrability of the discrete
nonlinear Schr\"{o}dinger dynamical system \ (\ref{D2}). First we can easily
enough state, making use of the standard asymptotical small parameter
approach \cite{PM,BPS,HPP}, that the N\"{o}ther equation \ (\ref{D5}) on the
manifold $M_{(N)}$ possesses \cite{Pr,MS} the following exact Poissonian
operator solution:

\begin{equation}
\vartheta _{n}=\left(
\begin{array}{cc}
0 & ih_{n} \\
-ih_{n} & 0%
\end{array}%
\right) ,  \label{D71}
\end{equation}%
$n\in \mathbb{Z}_{N},$ subject to which the dynamical the dynamical system \
(\ref{D2}) is Hamiltonian:%
\begin{equation}
d(u,v)^{\intercal }/dt=-\vartheta \text{ }\mathrm{grad}\ H_{\vartheta }[u,v]
\label{D72}
\end{equation}%
on the periodic manifold $M_{(N)},$ where the Hamiltonian function
\begin{equation}
H_{\vartheta }:=\sum_{n=0}^{N-1}\ln
h_{n}^{2}-\sum_{n=0}^{N-1}(v_{n}u_{n-1}+v_{n-1}u_{n}-)=2\ln \bar{\gamma}_{0}-%
\frac{1}{2}(\gamma _{0}+\gamma _{0}^{\ast }).  \label{D73}
\end{equation}%
By means of similar, but more cumbersome calculations, one can find the
second Poissonian operator solution to the N\"{o}ther equation \ (\ref{D5})
in the following matrix form:%
\begin{eqnarray}
\eta _{n} &=&\left(
\begin{array}{cc}
(h_{n}-u_{n}D_{n}^{-1}u_{n})\Delta & (u_{n}^{2}+u_{n}D_{n}^{-1}u_{n})\Delta
^{-1} \\
v_{n}D_{n}^{-1}v_{n}\Delta & -(1+v_{n}D_{n}^{-1}u_{n})\Delta ^{-1}%
\end{array}%
\right) \times  \notag \\
&&\times \left(
\begin{array}{cc}
u_{n}D_{n}^{-1}u_{n} & (h_{n}-u_{n}D_{n}^{-1}v_{n} \\
1+v_{n}D_{n}^{-1}u_{n} & -(v_{n}+v_{n}D_{n}^{-1}v_{n})%
\end{array}%
\right) ,  \label{D74}
\end{eqnarray}%
where the operation $D_{n}^{-1}(...):=\frac{1}{2}[\sum%
\limits_{k=0}^{n-1}(...)_{k}-\sum\limits_{k=n}^{N-1}(...)_{k}]$ is
quasi-skew-symmetric with respect to the usual bi-linear form on $T^{\ast
}(M_{(N)})\times T(M_{(N)}),$ satisfying the operator identity $%
(D_{n}^{-1})^{\ast }=-\Delta ^{-1}D_{n}^{-1}\Delta ,$ $n\in \mathbb{Z}.$

The Poissonian operators \ (\ref{D71}) and \ (\ref{D74}) are compatible,
that makes it possible to construct by means of the algebraic
gradient-holonomic algorithm the related Lax type representation for the
dynamical system \ (\ref{D2}). The corresponding result is as follows: the
discrete linear spectral problem
\begin{equation}
\Delta f_{n}=l_{n}[u,v;\lambda ]f_{n},  \label{D75}
\end{equation}%
where $f\in l_{\infty }(\mathbb{Z};\mathbb{C}^{2})$ and for $n\in \mathbb{Z}$
\begin{equation}
l_{n}[u,v;\lambda ]=\left(
\begin{array}{cc}
\lambda & u_{n} \\
v_{n} & \lambda ^{-1}%
\end{array}%
\right) ,  \label{D76}
\end{equation}%
allows the linear Lax type isospectral evolution%
\begin{equation}
df_{n}/dt=p_{n}(l)f_{n}  \label{D77}
\end{equation}%
for some matrix $p_{n}(l)\in End$ $\mathbb{C}^{2},n\in \mathbb{Z},$
equivalent to the Hamiltonian flow
\begin{equation}
df_{n}/dt=\{H_{\vartheta },f_{n}\}_{\vartheta },  \label{D78}
\end{equation}%
where $\{.,.\}_{\vartheta }$ is the corresponding to \ (\ref{D71})
Poissonian structure on the manifold $M_{(N)}.$ \ The equivalence of \ (\ref%
{D71}) and \ (\ref{D78}) can be easily enough demonstrated, if to construct
the corresponding to \ (\ref{D75}) monodromy matrix $S_{n}(\lambda ),n\in
\mathbb{Z},$ for all $\lambda \in \mathbb{C}$ and to calculate the
Hamiltonian evolution
\begin{equation}
\frac{d}{dt}S_{n}(\lambda )=\{H_{\vartheta },S_{n}(\lambda )\}_{\vartheta
}=[p_{n}(l),S_{n}(\lambda )],  \label{D79}
\end{equation}%
giving rise to the same matrix $p_{n}(l)\in End\mathbb{C}^{2},n\in \mathbb{Z}%
,$ as that entering equation \ (\ref{D77}).

Thus, we have shown that the nonlinear discrete Schr\"{o}dinger dynamical
system \ (\ref{D2}) is Lax type integrable bi-Hamiltonian flow on the
manifold $M_{(N)}.$ Since the solution $\varphi (\lambda )\in T^{\ast
}(M_{(N)}),$ constructed above, satisfies the gradient-like relationship
\begin{equation}
\lambda \vartheta \text{ }\varphi (\lambda )=\eta \text{ }\varphi (\lambda )
\label{D80}
\end{equation}%
for all for $\lambda \in \mathbb{C},$ we derive that the found above
conservation laws are commuting to each other with respect to both Poisson
brackets $\{.,.\}_{\vartheta }$ and $\{.,.\}_{\vartheta }.$ The latter gives
rise to the classical Liouville integrability \cite{Ar,MBPS} of the discrete
nonlinear Schr\"{o}dinger dynamical system \ (\ref{D2}) on the periodic
manifold $M_{(N)}.$ The detail analysis of the integrability procedure via
the mentioned before Bogoyavlensky- Novikov reduction \cite{BN,No} and the
explicit construction of solutions to the dynamical system \ (\ref{D2}) are
planned to be presented in a separate work.

\section{Example 2: the Ragnisco-Tu differential-difference dynamical system
and its integrability}

Consider the Ragnisco-Tu differential-difference dynamical system\ (\ref{D2a}%
), defined on the periodic manifold $M_{(N)}\subset l_{\infty }(\mathbb{Z};%
\mathbb{C}^{2}),$ and construct first the corresponding asymptotical
solution to the Lax equation (\ref{D9}). The following lemma holds.

\begin{lemma}
\label{Lm_9} The functional expression%
\begin{equation}
\varphi _{n}:=\binom{a_{n}(\lambda )}{1}\exp (\lambda
t)\prod\limits_{j=1}^{n}\sigma _{j}(\lambda ),  \label{d81}
\end{equation}%
is an asymptotical, as $\ \ \lambda \rightarrow \infty ,$ \ solution to the
determining Lax equation (\ref{D9}) for all $n\in \mathbb{Z}_{N}$ with the
operator $K^{\prime ,\ast }:T^{\ast }(M_{(N)})\rightarrow T^{\ast }(M_{(N)})$
of the form:
\begin{equation}
K_{n}^{\prime ,\ast }=\left(
\begin{array}{cc}
\begin{array}{c}
\Delta ^{-1}-2u_{n}v_{n} \\
-u_{n}^{2}%
\end{array}
&
\begin{array}{c}
v_{n}^{2} \\
-\Delta +2u_{n}v_{n}%
\end{array}%
\end{array}%
\right) .  \label{d82}
\end{equation}%
where, by definition,%
\begin{eqnarray}
\sigma _{n}(\lambda ) &\simeq &\lambda (1-\sum\limits_{s\in \mathbb{Z}%
_{+}}\sigma _{n}^{(s)}[u,v]\lambda ^{-s}),  \label{d83} \\
a_{n}(\lambda ) &\simeq &\sum\limits_{s\in \mathbb{Z}_{+}}a_{n}^{(s)}[u,v]%
\lambda ^{-s},  \notag
\end{eqnarray}%
and the following analytical expressions%
\begin{eqnarray}
\sigma _{n}^{(0)} &=&0,a_{n}^{(0)}=0;\sigma
_{n}^{(1)}=-2u_{n-1}v_{n-1},a_{n}^{(1)}=-v_{n}^{2};  \notag \\
\sigma _{n}^{(2)}
&=&2u_{n-1}v_{n-2}-u_{n-1}^{2}v_{n-1}^{2},a_{n}^{(2)}=2v_{n}(v_{-n-1}-v_{n}^{2}u_{n});
\notag \\
\sigma _{n}^{(3)} &=&-2u_{n-1}v_{n-2}-D_{n}^{-1}(d\sigma
_{n}^{(2)}/dt+\sigma _{n}^{(1)}d\sigma _{n}^{(1)}/dt),  \label{d84} \\
a_{n}^{(3)}
&=&-da_{n}^{(2)}/dt-2(u_{n-1}v_{n-2}v_{n}^{2}-u_{n}v_{n}v_{n-1}^{2}),...,
\notag
\end{eqnarray}%
and so on, hold.
\end{lemma}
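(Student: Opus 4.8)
The plan is to verify directly that the proposed functional expression $\varphi_n$ in (\ref{d81}) satisfies the Lax equation (\ref{D9}) asymptotically, by substituting the ansatz into $d\varphi_n/dt + K_n^{\prime,\ast}\varphi_n = 0$ and solving for the expansion coefficients $\sigma_n^{(s)}$ and $a_n^{(s)}$ recursively in powers of $\lambda^{-1}$. This mirrors exactly the structure of the proof of Lemma \ref{Lm_8a} for the nonlinear Schrödinger case, so the same machinery applies. First I would compute the two scalar components of the equation (\ref{D9}) explicitly, using the operator $K_n^{\prime,\ast}$ from (\ref{d82}) and the product structure $\varphi_n \simeq \binom{a_n(\lambda)}{1}\exp(\lambda t)\prod_{j=1}^n \sigma_j(\lambda)$. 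The key observation is that the shift operators $\Delta$ and $\Delta^{-1}$ act on the telescoping product $\prod_{j=1}^n \sigma_j$ by multiplying or dividing by a single factor $\sigma_{n+1}$ or $\sigma_n$, which converts the functional-difference equation into algebraic relations among neighboring coefficients.

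\emph{Next}, after dividing out the common factor $\exp(\lambda t)\prod_{j=1}^n \sigma_j(\lambda)$, I would extract the two determining equations — one from the bottom component (governing $\sigma_n$) and one from the top component (governing $a_n$) — analogous to (\ref{D67}) and (\ref{D67a}) in the Schrödinger case. The dispersion factor $\exp(\lambda t)$ contributes the term $\lambda$ upon time-differentiation, which must balance the leading $\Delta$ and $\Delta^{-1}$ contributions from (\ref{d82}); this fixes the leading asymptotics $\sigma_n \simeq \lambda$ and $a_n^{(0)} = 0$ stated in (\ref{d83})--(\ref{d84}). Equating coefficients at successive powers $\lambda^{-s}$ then yields, step by step, closed-form expressions for $\sigma_n^{(s)}$ and $a_n^{(s)}$ in terms of lower-order quantities and the discrete inverse derivative $D_n^{-1}$, reproducing the explicit list in (\ref{d84}).

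The \textbf{main obstacle} I anticipate is establishing the solvability of the recursion at every order $s \in \mathbb{Z}_+$, specifically that the right-hand sides appearing at each stage lie in the domain of $D_n^{-1}$ — i.e., that they are total discrete derivatives with vanishing mean over the period, so that the inverse operation $D_n^{-1}$ (which requires its argument to be in the image of $D_n = \Delta - 1$) is well-defined on the $N$-periodic manifold $M_{(N)}$. This is precisely the point where the conservation-law structure guaranteed by Lemma \ref{Lm_3} and the general asymptotic scheme of Proposition \ref{Pr_5} must be invoked: the symmetry condition $\varphi^\prime = \varphi^{\prime\ast}$ together with the gradient identity ensures that the obstruction terms are exact, allowing the inversion to proceed. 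Once solvability is confirmed inductively, the conclusion follows: the coefficients are determined consistently for all $s$, so the ansatz (\ref{d81}) is a genuine asymptotic solution to the Lax equation, and the accompanying functional $\gamma(\lambda) = \sum_{n \in \mathbb{Z}_N} \ln(\lambda^{-\tilde\sigma}\sigma_n[u,v;\lambda])$ generates the desired infinite hierarchy of conservation laws.
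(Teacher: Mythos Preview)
Your plan matches the paper's own proof almost exactly: substitute the ansatz (\ref{d81}) into the Lax equation, divide out the common factor $\exp(\lambda t)\prod\sigma_j$, obtain two coupled functional equations for $\sigma_n(\lambda)$ and $a_n(\lambda)$ (the paper's equations (\ref{d85})), and then solve the resulting recursion order by order in $\lambda^{-1}$ to produce the list (\ref{d84}).

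The one place where you diverge is your proposed resolution of the $D_n^{-1}$ solvability issue. You suggest invoking the symmetry condition $\varphi'=\varphi'^{\ast}$ from Lemma~\ref{Lm_3} together with the gradient identity to force exactness of the obstruction terms. That reasoning is circular here: the symmetry of $\varphi$ is a property of the \emph{constructed} solution, not an input hypothesis, so it cannot be used to justify the construction itself. The paper handles this point more directly and without appeal to the gradient structure: it simply observes that, from the form of the functional equations (\ref{d85}) themselves, $\tfrac{d}{dt}\ln\sigma_n(\lambda)=D_n\rho_n(\lambda)$ for some local functional $\rho_n(\lambda)$, i.e.\ the time derivative of $\ln\sigma_n$ is automatically a total discrete difference. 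This is what makes the $D_n^{-1}$ appearing in the recursion well-defined at every order, and it follows from the structure of $K^{\prime,\ast}$ rather than from any a~posteriori symmetry of $\varphi$. Replace your symmetry argument with this direct observation and the proof goes through exactly as the paper's does.
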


\begin{proof}
It is easy to calculate that local $\sigma $- and $a$-functionals on $%
M_{(N)} $ satisfy the following functional equations:%
\begin{equation}
\begin{array}{c}
\lambda (1-\sigma _{n}(\lambda ))+D_{n}^{-1}\frac{d}{dt}\ln \sigma
_{n}(\lambda )-u_{n}^{2}a_{n}(\lambda )+2u_{n}v_{n}=0, \\
da_{n}(\lambda )/dt+\lambda a_{n}(\lambda )+a_{n}(\lambda )D_{n}^{-1}\frac{d%
}{dt}\ln \sigma _{n}(\lambda )-2u_{n}v_{n}\lambda ^{-1}a_{n-1}(\lambda
)\sigma _{n}(\lambda )^{-1}+v_{n}^{2}=0,%
\end{array}
\label{d85}
\end{equation}%
which allow the asymptotical as $\lambda \rightarrow \infty $ solutions in
the form (\ref{d83}). Then, solving step-by-step the corresponding recurrent
equations one finds the exact analytical expressions (\ref{d84}). Taking now
into account that for each $n\in \mathbb{Z}_{+}$ there exists such a local
functional $\rho _{n}(\lambda )$ that the expression $\frac{d}{dt}\ln \sigma
_{n}(\lambda )=D_{n}\rho _{n}(\lambda )$ holds on $M_{(N)}$ identically, we
derive that the functional expression (\ref{d81}) solves the Lax equation (%
\ref{D9}), proving the lemma.
\end{proof}

As a simple corollary of Lemma \ref{Lm_9} we obtain that the expression
\begin{equation}
\gamma (\lambda ):=\sum_{n=1}^{N}\ln (1-\sum\limits_{s\in \mathbb{Z}%
_{+}}\sigma _{n}^{(s)}\lambda ^{-s-1})\simeq \sum_{j\in \mathbb{Z}%
_{+}}\gamma _{j}\lambda ^{-j}  \label{d87}
\end{equation}%
is a generating functional for the infinite hierarchy of conservation laws $%
\gamma _{j}\in D(M_{(N)}),j\in \mathbb{Z}_{+},$ of the Ragnisco-Tu
differential-difference dynamical system\ (\ref{D2a}).

Show now that the Ragnisco-Tu differential-difference dynamical system\ (\ref%
{D2a}) is a bi-Hamiltonian dynamical system on the functional manifold $%
M_{(N)}.$ Really, based on Lemma \ref{Lm_4}, we can find that the element $\
\psi :=\frac{1}{2}(v_{n},-u_{n})^{\intercal }\in T^{\ast }(M_{(N)})$
satisfies \ the functional equation (\ref{D14}):%
\begin{equation}
d\psi /dt+K^{^{\prime ,\ast }}\psi =\mathrm{grad}\text{ }\mathcal{L},\text{ }%
\mathcal{L}=\mathcal{-}\frac{1}{2}\sum_{k=0}^{N-1}u_{n}^{2}v_{n}^{2},
\label{d88}
\end{equation}%
giving rise to the first Poissonian structure
\begin{equation}
\vartheta _{n}:=\psi _{n}^{\prime }-\psi _{n}^{\prime ,\ast }=\left(
\begin{array}{cc}
0 & 1 \\
-1 & 0%
\end{array}%
\right)  \label{d89}
\end{equation}%
on the manifold $M_{(N)}\ $\ with respect to which the
differential-difference dynamical system\ (\ref{D2a}) is Hamiltonian:%
\begin{equation}
\frac{d}{dt}(u_{n},v_{n})^{\intercal }=-\vartheta _{n}\text{ }\mathrm{grad}%
\text{ }H_{\vartheta ,n}[u,v],  \label{d90}
\end{equation}%
where the Hamiltonian function, owing to relationship (\ref{D19}), equals
\begin{equation}
H_{\vartheta }:=(\psi ,K)-\mathcal{L}_{\psi
})=\sum_{k=0}^{N-1}(u_{n}^{2}v_{n}^{2}/2-u_{n}v_{n-1})=-\frac{1}{2}%
\sum_{k=0}^{N-1}\sigma _{n}^{(2)}.  \label{d91}
\end{equation}%
The same way one can find the second compatible with (\ref{d89}) Poissonian
operator
\begin{equation}
\eta _{n}:=\left(
\begin{array}{cc}
-u_{n}^{2}+2u_{n}D_{n}^{-1}\Delta u_{n} & \Delta -2u_{n}D_{n}^{-1}\Delta
v_{n} \\
-\Delta ^{-1}+2u_{n}v_{n}-2v_{n}D_{n}^{-1}\Delta u_{n} &
-v_{n}^{2}+2v_{n}D_{n}^{-1}\Delta v_{n}%
\end{array}%
\right) ,  \label{d92}
\end{equation}%
for which
\begin{equation}
\frac{d}{dt}(u_{n},v_{n})^{\intercal }=-\eta _{n}\text{ }\mathrm{grad}\text{
}H_{\eta ,n}[u,v],  \label{d93}
\end{equation}%
where the Hamiltonian function
\begin{equation}
H_{\eta }:=-\sum_{k=1}^{N}u_{n}v_{n}=\frac{1}{2}\sum_{k=1}^{N}\sigma
_{n+1}^{(1)}.  \label{d94}
\end{equation}%
Moreover, we claim that the hierarchy of conservation laws (\ref{d87})
satisfies as $\lambda \rightarrow \infty $ the gradient relationship
\begin{equation}
\lambda \vartheta \text{ }\mathrm{grad}\text{ }\gamma (\lambda )=\eta \text{
}\mathrm{grad}\text{ }\gamma (\lambda ),  \label{d95}
\end{equation}%
entailing their commutation with respect to both Poissonian structures (\ref%
{d89}) and (\ref{d92}). The latter allows us to argue that the Ragnisco-Tu
differential-difference dynamical system\ (\ref{D2a}) is a completely
integrable bi-Hamiltonian dynamical system on the manifold $M_{(N)}.$

Since the gradient relationship (\ref{d95}) \ gives rise to the following
`adjoint' Lax type representation
\begin{equation}
d\Lambda /dt=[\Lambda ,K^{^{\prime ,\ast }}],  \label{d96}
\end{equation}%
where, by definition, the expression $\Lambda :=\vartheta ^{-1}\eta :T^{\ast
}(M_{(N)})\rightarrow T^{\ast }(M_{(N)})$ is called a \textit{recursion }%
operator. \ Based on the gradient relationship (\ref{d95}) and expression (%
\ref{D29a}) we can obtain, within the gradient holonomic approach, that the
Ragnisco-Tu differential-difference dynamical system\ (\ref{D2a}) is also
Lax type integrable whose standard linear shift Lax type spectral problem
equals%
\begin{equation}
\Delta f_{n}=l_{n}[u,v;\lambda ]f_{n},\text{ \ \ \ }l_{n}[u,v;\lambda
]=\left(
\begin{array}{cc}
\lambda +u_{n}v_{n} & u_{n} \\
v_{n} & 1%
\end{array}%
\right) ,  \label{d97}
\end{equation}%
for all $n\in \mathbb{Z},\ \lambda \in $\ $\mathbb{C},$ \ where ($u,v)\in
M_{(N)}$ $\ $\ and $\ \ f\in l_{\infty }(\mathbb{Z};\mathbb{C}^{2}).$

\section{Example 3: the generalized differential-difference Riemann-Burgers
and Riemann type dynamical systems and their integrability}

In \cite{PAPP,GBPPP,GPPP} there was analyzed by means of simple enough
differential-algebraic tools a generalized (owing to D. Holm and M. Pavlov)
Riemann type hydrodynamical hierarchy of equations
\begin{equation}
D_{t}^{s}u=0,\text{ \ \ }D_{t}:=\partial /\partial t+uD_{x},D_{x}:=\partial
/\partial x,  \label{d98}
\end{equation}%
on a smooth functional manifold $\mathcal{\bar{M}}\subset C^{\infty }(%
\mathbb{R};\mathbb{R})$ for any integer $s\in \mathbb{Z}_{+}$ and proved
their both bi-Hamiltonian structure and Lax type integrability. For \ $s=2$
\ equation \ (\ref{d98}) possesses a Lax type representation, whose Lax $l$%
-operator is given \cite{GPPP} by the expression
\begin{equation}
l[u,v;\lambda ]=\left(
\begin{array}{cc}
-u_{x}\lambda & -v_{x} \\
2\lambda ^{2} & u_{x}\lambda%
\end{array}%
\right) ,  \label{d99}
\end{equation}%
where the vector-function $(u,v)^{\intercal }\in \mathcal{M}\subset
C^{\infty }(\mathbb{R};\mathbb{R}^{2})$ satisfies the equivalent to \ (\ref%
{d98}) nonlinear dynamical system
\begin{equation}
D_{t}u=v,\text{ \ \ }D_{t}v=0,  \label{d100}
\end{equation}%
and $\lambda \in \mathbb{C}$ is an arbitrary invariant spectral parameter.
For studying differential-difference versions of equations \ (\ref{d100}) we
will apply to the Lax $l$-operator \ (\ref{d99}) the Ablowitz-Ladik
discretization scheme \cite{AL}. As a result of simple calculations we
obtain the following new discrete Lax type spectral problem:
\begin{equation}
f_{n+1}=l_{n}[u,v;\lambda ]f_{n},\text{ \ \ \ \ \ }l_{n}[u,v;\lambda
]=\left(
\begin{array}{cc}
1-\lambda D_{n}u_{n} & -D_{n}v_{n} \\
2\lambda ^{2} & 1+\lambda D_{n}u_{n}%
\end{array}%
\right)  \label{d101}
\end{equation}%
for $n\in \mathbb{Z},$\ where function $f\in l_{\infty }(\mathbb{Z};\mathbb{C%
}^{2}),$ the vector $(u,v)^{\intercal }\in M,$ if the resulting discrete
dynamical system is considered on an $N$-periodical discrete manifold $%
M\subset (\mathbb{R}^{2})^{\mathbb{Z}_{N}}$\ $.$

To study the related with \ the spectral problem (\ref{d101}) nonlinear
differential-difference dynamical systems, we will make use of a slightly
generalized gradient-holonomic scheme.

First, we formulate the following simple but useful lemma.

\begin{lemma}
\ \label{Lm_10} The following matrix
\begin{equation}
\tilde{F}_{m,n}(\lambda )=\left(
\begin{array}{cc}
\tilde{e}_{m,n}^{(1)}(\lambda ) & -D_{m}\hat{u}_{m}(\lambda )\tilde{e}%
_{m,n}^{(2)}(\lambda )/(2\lambda ) \\
-2\lambda \tilde{e}_{m,n}^{(1)}(\lambda )/D_{m}\check{u}_{m}(\lambda ) &
\tilde{e}_{m,n}^{(2)}(\lambda )%
\end{array}%
\right) ,  \label{d102}
\end{equation}%
where%
\begin{eqnarray}
\tilde{e}_{m,n}^{(1)}(\lambda ) &:&=\prod\limits_{k=n}^{m-1}[1-\lambda
D_{k}u_{k}+\lambda (D_{k}u_{k})^{2}/D_{k}\check{u}_{k}(\lambda )],  \notag \\
\tilde{e}_{n,m}^{(2)}(\lambda ) &:&=\prod\limits_{k=n}^{m-1}[1+\lambda
(D_{k}u_{k}-D_{k}\hat{u}_{k}(\lambda ))],  \label{d103}
\end{eqnarray}%
jointly with functional relationships%
\begin{equation}
\binom{D_{n}\check{u}_{n}(\lambda )}{D_{n}\hat{u}_{n}(\lambda )}=\binom{%
D_{n}u_{n}+\lambda ^{-1}[1+\frac{D_{n}\check{u}_{n}(\lambda )-D_{n}u_{n}}{%
D_{n}^{2}\check{u}_{n}(\lambda )}]^{-1}}{D_{n}u_{n}+\lambda ^{-1}[1+\frac{%
D_{n}\hat{u}_{n}(\lambda )-D_{n}u_{n}}{D_{n}^{2}\hat{u}_{n}(\lambda )}]^{-1}}
\label{d104}
\end{equation}%
and
\begin{equation}
2D_{n}v_{n}-(D_{n}u_{n})^{2}=0,\   \label{d104a}
\end{equation}%
solves the associated with spectral problem \ (\ref{d101}) \ linear matrix
equation
\begin{equation}
\tilde{F}_{m+1,n}(\lambda )=l_{m}[u,v;\lambda ]\tilde{F}_{m,n}(\lambda )
\label{d105}
\end{equation}%
under the initial condition%
\begin{equation}
\tilde{F}_{m,n}(\lambda )|_{m=n}=\mathbf{I+}O\mathbf{(}1/\lambda )
\label{d106}
\end{equation}%
as $\lambda \rightarrow \infty $ for all $m,n\in \mathbb{Z}.$
\end{lemma}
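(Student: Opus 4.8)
The plan is to verify Lemma~\ref{Lm_10} by direct substitution of the ansatz (\ref{d102}) into the recursion (\ref{d105}), exploiting the purely multiplicative (telescoping) structure of the products $\tilde{e}_{m,n}^{(1)}(\lambda)$ and $\tilde{e}_{m,n}^{(2)}(\lambda)$ in (\ref{d103}). Writing the two columns of $\tilde{F}_{m,n}(\lambda)$ as $\tilde{e}_{m,n}^{(1)}\,v_m^{(1)}$ and $\tilde{e}_{m,n}^{(2)}\,v_m^{(2)}$ with the $\lambda$-dependent direction vectors $v_m^{(1)}:=(1,\,-2\lambda/D_m\check{u}_m)^{\intercal}$ and $v_m^{(2)}:=(-D_m\hat{u}_m/(2\lambda),\,1)^{\intercal}$, and using $\tilde{e}_{m+1,n}^{(j)}=\mu_m^{(j)}\,\tilde{e}_{m,n}^{(j)}$ with $\mu_m^{(1)}:=1-\lambda D_m u_m+\lambda(D_m u_m)^2/D_m\check{u}_m$ and $\mu_m^{(2)}:=1+\lambda(D_m u_m-D_m\hat{u}_m)$, the full matrix identity (\ref{d105}) collapses, after cancelling the common scalar factor in each column, to the two vector relations $\mu_m^{(1)}v_{m+1}^{(1)}=l_m v_m^{(1)}$ and $\mu_m^{(2)}v_{m+1}^{(2)}=l_m v_m^{(2)}$. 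In this way the proof reduces to per-site scalar identities, and the ordered-product character of the fundamental solution is handled automatically by the telescoping of the $\tilde{e}^{(j)}$.

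I would first dispose of the diagonal (first-component) parts of these two relations. Computing $\det l_n[u,v;\lambda]=1+\lambda^{2}\big(2D_n v_n-(D_n u_n)^{2}\big)$, one sees that the constraint (\ref{d104a}) is exactly the unimodularity condition $\det l_n=1$; together with $\mathrm{tr}\,l_n=2$ this forces the characteristic polynomial $(\mu-1)^{2}$, so that $l_n$ carries a single (parabolic) eigenvalue on the constraint surface. Under (\ref{d104a}) the first component of $\mu_m^{(1)}v_{m+1}^{(1)}=l_m v_m^{(1)}$ reproduces precisely the factor defining $\tilde{e}^{(1)}$ in (\ref{d103}), while the first component of the second relation reproduces the factor defining $\tilde{e}^{(2)}$ with no further hypothesis. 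This is the step where (\ref{d104a}) enters and where the consistency of the product representation (\ref{d103}) is secured.

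Next I would extract the off-diagonal (second-component) parts. Dividing through by the common product factors, the second component of $\mu_m^{(1)}v_{m+1}^{(1)}=l_m v_m^{(1)}$ becomes a first-order difference relation expressing $D_{m+1}\check{u}_{m+1}$ through $D_m\check{u}_m$ and $D_m u_m$; reading $D_n^{2}\check{u}_n$ as the second difference $D_{n+1}\check{u}_{n+1}-D_n\check{u}_n$, this relation is identically the $\check{u}$-component of (\ref{d104}). The second component of the $v^{(2)}$-relation likewise yields, after using (\ref{d104a}) to eliminate $D_m v_m$, the $\hat{u}$-component of (\ref{d104}). Hence (\ref{d105}) is equivalent to the conjunction of (\ref{d104a}), the product rules (\ref{d103}), and the two scalar equations (\ref{d104}); this algebraic equivalence is the technical core of the lemma. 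Solvability of (\ref{d104}) is then established by expanding $D_n\check{u}_n$ and $D_n\hat{u}_n$ as formal series in $\lambda^{-1}$ and solving the resulting recurrences order by order, exactly as in the explicit computations for the companion systems of Sections~4 and~5.

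The delicate point, and the step I expect to be the main obstacle, is the initial normalization (\ref{d106}). At $m=n$ the products $\tilde{e}_{n,n}^{(1)}$ and $\tilde{e}_{n,n}^{(2)}$ are empty and equal to $1$, so $\tilde{F}_{n,n}(\lambda)$ reduces to the direction matrix $[\,v_n^{(1)}\mid v_n^{(2)}\,]$, whose entries must be shown to approach the identity as $\lambda\to\infty$. Because $l_n$ is parabolic on the constraint surface, its two eigendirections degenerate to a single one and the factorization (\ref{d102}) is only an asymptotic one: the two solutions $D_n\check{u}_n$ and $D_n\hat{u}_n$ of (\ref{d104}) emanate from the same (double) value $D_n u_n$ and separate only at subleading orders in $\lambda^{-1}$. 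Controlling these two asymptotic branches sharply enough to pin down the normalization of the lower-left entry $-2\lambda/D_n\check{u}_n$, and thereby to confirm the claimed behaviour $\tilde{F}_{n,n}(\lambda)=\mathbf{I}+O(1/\lambda)$, is where the genuine work lies; once the branch of (\ref{d104}) producing this normalization is fixed, the remaining verifications are the routine order-by-order recurrences indicated above.
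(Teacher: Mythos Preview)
The paper does not supply a proof of Lemma~\ref{Lm_10}; the lemma is simply stated and the text proceeds directly to its corollary (\ref{d107}). So there is no ``paper's own proof'' to compare against, and your proposal is in fact an attempt to fill a gap the authors left open.

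Your overall strategy---writing the columns of $\tilde{F}_{m,n}$ as $\tilde{e}_{m,n}^{(j)}v_m^{(j)}$, exploiting the telescoping $\tilde{e}_{m+1,n}^{(j)}=\mu_m^{(j)}\tilde{e}_{m,n}^{(j)}$, and reducing (\ref{d105}) to the two per-site vector relations $\mu_m^{(j)}v_{m+1}^{(j)}=l_m v_m^{(j)}$---is the natural route and is carried out correctly. In particular, your identification of (\ref{d104a}) with $\det l_n=1$ is right: the first component of the column-one relation gives $\mu_m^{(1)}=1-\lambda D_m u_m+2\lambda D_m v_m/D_m\check{u}_m$, which matches the defining factor in (\ref{d103}) precisely when $2D_m v_m=(D_m u_m)^2$. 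Likewise the second components do collapse to the scalar recursions (\ref{d104}) once one reads $D_n^2\check{u}_n$ as $D_{n+1}\check{u}_{n+1}-D_n\check{u}_n$.

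Where your proposal stops short---and you are candid about this---is exactly the right place to be worried. At $m=n$ the empty products give $\tilde{e}_{n,n}^{(j)}=1$, so $\tilde{F}_{n,n}(\lambda)$ is the direction matrix with $(2,1)$ entry $-2\lambda/D_n\check{u}_n(\lambda)$. The paper's own corollary (\ref{d107}) asserts $D_n\check{u}_n(\lambda)\to D_n u_n$ as $\lambda\to\infty$, and with that limit the $(2,1)$ entry is $O(\lambda)$, not $O(1/\lambda)$; the claimed normalization (\ref{d106}) then cannot hold as written. Your suggestion that one must select a different asymptotic branch of the implicit relation (\ref{d104}) for $\check{u}$ (one with $D_n\check{u}_n$ growing at least like $\lambda^2$) is plausible, but it is in direct tension with (\ref{d107}), which the paper derives from the \emph{same} relation (\ref{d104}). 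So this is not merely a missing computation on your side: either the initial condition (\ref{d106}), the limits (\ref{d107}), or the precise form of (\ref{d102})/(\ref{d104}) needs to be reinterpreted. You have correctly isolated the inconsistency; resolving it would require going beyond what the paper actually states.
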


As a corollary from relationships \ (\ref{d104}) one easily finds limits
\begin{equation}
\lim_{\lambda \rightarrow \infty }D_{n}\check{u}_{n}(\lambda )=D_{n}u_{n},%
\text{ \ \ }\lim_{\lambda \rightarrow \infty }D_{n}\hat{u}_{n}(\lambda
)=D_{n}u_{n},  \label{d107}
\end{equation}%
uniformly holding for all $n\in \mathbb{Z}.$

Now, based on the matrix expressions (\ref{d102}) and (\ref{d103}), one can
construct the fundamental matrix%
\begin{equation}
F_{m,n}(\lambda ):=\tilde{F}_{m,n}(\lambda )\tilde{F}_{n,n}^{-1}(\lambda ),
\label{d108}
\end{equation}%
solving the linear problem
\begin{equation}
F_{m+1,n}(\lambda )=l_{m}[u,v;\lambda ]F_{m,n}(\lambda )  \label{d109}
\end{equation}
under the initial condition
\begin{equation}
F_{m,n}(\lambda )|_{m=n}=\mathbf{I}  \label{d110}
\end{equation}%
for all $n\in \mathbb{Z}$ as $\lambda \rightarrow \infty .$

\bigskip Taking into account that the manifold $M$ is $N$ - periodic, one
can construct the next important objects - the asymptotical as $\lambda
\rightarrow \infty $ \ monodromy matrix%
\begin{equation}
S_{n}(\lambda ):=F_{n+N,n}(\lambda )  \label{d111}
\end{equation}%
for any $n\in \mathbb{Z}.$ By construction, the monodromy matrix (\ref{d111}%
) satisfies the following useful properties:%
\begin{equation}
S_{n+N}(\lambda ):=S_{n}(\lambda ),\text{ \ \ }\det S_{n}(\lambda )=1,
\label{d112}
\end{equation}%
holding for all $n\in \mathbb{Z}$ and $\lambda \rightarrow \infty .$

Keeping now in mind the importance of invariants and Poissonian structures
related with the linear spectral problem (\ref{d101}), we proceed to
studying its basic Lie-algebraic properties and connections with the so
called vertex operator representation \cite{JM,Ne} of the related whole
hierarchy of integrable differential-difference dynamical systems on the
manifold $M.$

Namely, we will sketch below the Lie-algebraic aspects \cite%
{FT,Ne,RS-T,RS-T1} of differential-difference dynamical systems, associated
with our Lax-type linear difference spectral problem \ (\ref{d101}), where \
one assumes that the matrix $\ $\ $l_{n}:=l_{n}[u,v;\lambda ]\in G_{n}:=$ $%
GL_{2}(\mathbb{C)\otimes C}(\lambda ,\lambda ^{-1})$ for $n\in \mathbb{Z}_{N}
$ $:=\mathbb{Z}/N\mathbb{Z}$ \ as $\lambda \rightarrow \infty .$ To describe
the related Lax type integrable dynamical systems, \ define first the matrix
product-group $G^{N}:=\overset{N}{\underset{j=1}{\otimes }}G_{j}$ and its
action $G^{N}\times M_{G}^{(N)}\rightarrow M_{G}^{(N)}$ \ \ on the phase
space $M_{G}^{(N)}:=\{l_{n}\in G_{n}:n\in $ \bigskip $\mathbb{Z}_{N}\},$ \
given as%
\begin{equation}
\{g_{n}\in G_{n}:n\in \mathbb{Z}_{N}\}\times \{l_{n}\in G_{n}:n\in \mathbb{Z}%
_{N}\}=\{g_{n}l_{n}g_{n+1}^{-1}\in G_{n}:n\in \mathbb{Z}_{N}\}.  \label{d113}
\end{equation}%
Subject to action ( \ref{d113}) a functional $\gamma \in \mathcal{D}%
(M_{G}^{(N)})$ is invariant iff the following discrete relationship%
\begin{equation}
\mathrm{grad}\gamma (l_{n})l_{n}=l_{n+1}\mathrm{grad}\gamma (l_{n+1})
\label{d114}
\end{equation}%
holds for all $n\in \mathbb{Z}_{N}.$

Assume further that the matrix group $G^{N}$ is identified with its tangent
spaces $T_{l}(G^{N}),$ $l\in G^{N},$ locally isomorphic to the Lie  algebra $%
\mathcal{G}^{(N)},$ where $\mathcal{G}^{(N)}$ is the corresponding Lie
algebra of the Lie group  $G^{N},$ isomorphic, by definition, to the tangent
space $T_{e}(G^{N})$ at the group unity $e\in G^{N}.$ With  any element $%
l\in G^{N}$ there are associated,  respectively,  the left $\eta _{l}:%
\mathcal{G}^{(N)}\rightarrow T_{l}(G^{N})$ and right $\ \ \rho _{l}:\mathcal{%
G}^{(N)}\rightarrow T_{l}(G^{N})$ differentials of \ the left and right
shifts  on the Lie group $G^{N},$ and their adjoint mappings $\rho
_{l}^{\ast }:T_{l}^{\ast }(G^{N})\rightarrow \mathcal{G}^{(N),\ast }$ and $%
\eta _{l}^{\ast }:T_{l}^{\ast }(G^{N})\rightarrow \mathcal{G}^{(N),\ast },$
\ where%
\begin{eqnarray}
(\rho _{l}^{\ast }\mathrm{grad}\gamma (l),X) &=&(\mathrm{grad}\gamma
(l),Xl)=(l\text{ }\mathrm{grad}\gamma (l),X):=tr(l\text{ }\mathrm{grad}%
\gamma (l)X),\text{ \ \ }  \notag \\
(\eta _{l}^{\ast }\mathrm{grad}\gamma (l),X) &=&(\mathrm{grad}\gamma
(l),lX)=(\mathrm{grad}\gamma (l)l,X):=tr(\mathrm{grad}\gamma (l)lX)
\label{d115}
\end{eqnarray}%
for any   $X\in \mathcal{G}^{(N)}$ and smooth functional $\gamma \in
\mathcal{D}(G^{N}),$ $tr:G^{N}\rightarrow \mathbb{C}$ is a   trace-operation
on the group $G^{N}:trA:=res_{\lambda =\infty }\sum\limits_{j\in \mathbb{Z}%
_{N}}SpA_{j}[u,v;\lambda ]$ for any $A\in G^{N}.$ Owing to (\ref{d114}) and (%
\ref{d115}) we can define the set%
\begin{equation}
\{\Phi _{n}=\mathrm{grad}\gamma (l_{n})l_{n}\mathrm{\ }\in \mathcal{G}%
_{n}^{\ast }:=T_{e}^{\ast }(G),\text{ \ }n\in \mathbb{Z}_{N}\}  \label{d116}
\end{equation}%
belonging to the space  $\mathcal{G}^{(N),\ast }\simeq T_{e}^{\ast }(G^{N})$
and satisfying the following invariance property:%
\begin{equation}
\Phi _{n+1}=Ad_{l_{n}}^{\ast }\Phi _{n}(\lambda )=l_{n}^{-1}\Phi
_{n}(\lambda )l_{n}  \label{d117}
\end{equation}%
for any $n\in \mathbb{Z}_{N}.$ The relationship (\ref{d117}) allows to
define a function $\varphi :G^{N}\rightarrow \mathbb{C}$ invariant with
respect to the adjoint action
\begin{equation}
G_{n}\times G_{n}\ni (g,S_{n}(\lambda ))\rightarrow ad_{g}S_{n}(\lambda
)=gS_{n}(\lambda )g^{-1}\in G_{n}  \label{d118}
\end{equation}%
for any $n\in \mathbb{Z}_{N}$ and such that
\begin{equation}
\gamma (l)=\varphi \lbrack S_{N}(\lambda )],\text{ }\Phi _{N}=\mathrm{grad}%
\varphi \lbrack S_{N}(\lambda )]S_{N}(\lambda ),  \label{d119}
\end{equation}%
where, by definition, the expression%
\begin{equation}
S_{N}(\lambda )=\overset{\overset{N}{\longleftarrow }}{\underset{j=1}{\Pi }}%
l_{j}[u,v;\lambda ]  \label{d120}
\end{equation}%
coincides exactly with the proper monodromy matrix for the linear spectral
problem (\ref{d101}). Since, owing to (\ref{d117}), the matrices $\Phi _{n}=%
\mathrm{grad}\varphi \lbrack S_{n}(\lambda )]S_{n}(\lambda )\in \mathcal{G}%
_{n}^{\ast },$ $n\in \mathbb{Z}_{N},$ can be reconstructed from (\ref{d120}%
), we find  \cite{FT,RS-T1} the following Poissonian flow on the matrices $%
S_{n}(\lambda )\in G_{n},n\in \mathbb{Z}_{N}:$%
\begin{equation}
dS_{n}(\lambda )/dt=[\mathcal{R}(\mathrm{grad}\varphi \lbrack S_{n}(\lambda
)]S_{n}(\lambda )),S_{n}(\lambda )]  \label{d121}
\end{equation}
with respect to  the invariant Casimir function $\varphi \in I(\mathcal{G}%
_{n}^{\ast })$ and the quadratic Poissonian structure
\begin{equation}
\{\gamma _{1},\gamma _{2}\}:=(l,[\mathrm{grad}\gamma _{1}(l),\mathcal{R}(l%
\text{ }\mathrm{grad}\gamma _{2}(l))]+[\mathcal{R}(l\text{ }\mathrm{grad}%
\gamma _{1}(l)),\mathrm{grad}\gamma _{2}(l)])  \label{d121a}
\end{equation}%
for any functionals $\gamma _{1},\gamma _{2}\in \mathcal{D}(G^{N}),$
constructed by means of a skew-symmetric $\mathcal{R}$-structure $\mathcal{R}%
:\mathcal{G}^{(N),\ast }\rightarrow \mathcal{G}^{(N)}.\ $ In particular,
the equality
\begin{equation}
\lbrack \mathrm{grad}\varphi (S_{n}),S_{n}]=0  \label{d122}
\end{equation}%
holds for all $n\in \mathbb{Z}_{N}.$

Having taken into account (\ref{d119}), one can rewrite (\ref{d121}) in the
following equivalent form:%
\begin{equation}
dS_{n}/dt=[\mathcal{R}(\mathrm{grad}\gamma (l_{n})l_{n}),S_{n}],
\label{d124}
\end{equation}%
holding for all $n\in \mathbb{Z}_{N}.$ The latter jointly with (\ref{d117})
makes it possible to retrieve \cite{JM,RS-T} the related evolution of
elements $l_{n}\in G_{n},$ $n\in \mathbb{Z}_{N}:$%
\begin{eqnarray}
dl_{n}/dt &=&p_{n+1}(l)l_{n}-l_{n}p_{n}(l),  \label{d125} \\
p_{n}(l) &:&=\mathcal{R}(\mathrm{grad}\gamma (l_{n})l_{n})  \notag
\end{eqnarray}%
and following from the relationships%
\begin{eqnarray}
S_{n}(\lambda ) &=&\psi _{n}(l)S_{N}(\lambda )\psi _{n}^{-1}(l),
\label{d126} \\
\psi _{n}(l) &=&\overset{\overset{n}{\longleftarrow }}{\underset{j=1}{\Pi }}%
l_{j}[u,v;\lambda ].  \notag
\end{eqnarray}%
Subject to the linear spectral problem (\ref{d101}) the solution $f\in
l_{\infty }(\mathbb{Z},\mathbb{C}^{2})$ satisfies the associated temporal
evolution equation%
\begin{equation}
df_{n}/dt=p_{n}(l)f_{n}  \label{d127}
\end{equation}%
for any $n\in \mathbb{Z}.$ It is easy to check that the compatibility
condition of linear equations (\ref{d101}) and (\ref{d127}) is equivalent to
the discrete Lax type representation (\ref{d125}), which, upon reducing it
on the group manifold $M_{G},$ gives rise to the corresponding nonlinear Lax
type integrable dynamical system on the discrete manifold $M.$ It follows
from the fact that all Casimir invariant functions, when reduced on the
manifold $M_{G},$ are in involution \cite{RS-T,RS-T1} \ with respect to the
Poisson bracket \ (\ref{d121a}).

In the case when the skew-symmetric $\mathcal{R}$-structure $\mathcal{R}%
=1/2(P_{+}-P_{-}),\ $ where $P_{\pm }:\mathcal{G}_{n}\rightarrow \mathcal{G}%
_{n,\pm }\subset \mathcal{G}_{n}$ are the projectors on the $\lambda $%
-positive and $\lambda $- negative, respectively, degree subalgebras of the
Lie algebra $\mathcal{G}_{n},$ the determining Lax type equation (\ref{d125}%
) generates the flows%
\begin{equation}
\frac{d}{dt_{j}}l_{n}[u,v;\lambda ]=(\lambda ^{j+1}\tilde{S}_{n+1}(\lambda
))_{+}l_{n}[u,v;\lambda ]-l_{n}[u,v;\lambda ](\lambda ^{j+1}\tilde{S}%
_{n}(\lambda ))_{+}  \label{d129}
\end{equation}%
for all $j\in \mathbb{Z}_{+},$ where $\tilde{S}_{n}(\lambda ),$ $n\in
\mathbb{Z}_{N},$ are the corresponding asymptotical expansion of the
suitably normalized monodromy matrix $\tilde{S}_{n}(\lambda )\in \mathcal{G}%
_{-},n\in \mathbb{Z}_{N},$ as $\lambda \rightarrow \infty .$

\bigskip\ The hierarchies of evolution equations (\ref{d129}) can be
rewritten as the following generating flows:%
\begin{equation}
\frac{d}{dt_{(\mu )}}l_{n}[u,v;\lambda ]=\frac{\lambda \mu }{\mu -\lambda }[%
\tilde{S}_{n+1}(\mu )l_{n}[u,v;\lambda ]-l_{n}[u,v;\lambda ]\tilde{S}%
_{n}(\mu )]  \label{d132}
\end{equation}%
as $\lambda \rightarrow \infty $ and $\left\vert \lambda /\mu \right\vert
<1, $ \ \ where, by definition,
\begin{equation}
\frac{d}{dt_{(\mu )}}=\underset{j\in \mathbb{Z}_{+}}{\sum }\mu ^{-j}\frac{d}{%
dt_{j}}.  \label{d134}
\end{equation}%
Proceed now to describing the analytical structure of the regularized
matrices $\tilde{S}_{n}(\mu ),$ $n\in \mathbb{Z}_{N},$ for arbitrary $\mu
\in \mathbb{C}.$ To do this effectively we need to consider the
corresponding to (\ref{d132}) evolution equations for the monodromy matrix $%
S_{n}(\lambda )$ as $\lambda \rightarrow \infty ,$ which make it possible to
construct the related $\ $flows on functions $\check{a}_{n}:=D_{n}\check{u}%
_{n}(\lambda )$ and $\hat{a}_{n}:=D_{n}\hat{u}_{n}(\lambda ),n\in \mathbb{Z}%
_{N},$ \ represented \cite{JM,Ne} by means of the related vertex operators \
action $\hat{X}_{\lambda }:\bar{M}^{2}\rightarrow \bar{M}^{2},$ where $\ \
\bar{M}:=\mathbb{R}^{\mathbb{Z}_{N}},$
\begin{eqnarray}
\text{\ }\hat{X}_{\lambda } &:&=(\exp D_{\lambda },\exp (-D_{\lambda
}))^{\intercal },  \notag \\
D_{\lambda } &:&=\underset{j\in \mathbb{Z}_{+}}{\sum }\frac{1}{(j\lambda
^{j})}\frac{d}{dt_{j}},  \label{d135}
\end{eqnarray}%
and
\begin{equation}
\hat{X}_{\lambda }\left(
\begin{array}{c}
u \\
u%
\end{array}%
\right) =\left(
\begin{array}{c}
u(t_{0}+1/\lambda ,t_{1}+1/(2\lambda ^{2}),...) \\
u(t_{0}-1/\lambda ,t_{1}-1/(2\lambda ^{2}),...)%
\end{array}%
\right) ,  \label{d136}
\end{equation}%
as $\lambda \rightarrow \infty .$

Now, based on the flows \ref{d132}, one can derive the corresponding
evolution equation on the monodromy matrix $\tilde{S}_{n}(\lambda )\in
\mathcal{G}_{-},n\in \mathbb{Z}_{N},$ \ with respect to the vector field \ (%
\ref{d134}):%
\begin{equation}
\frac{d}{dt_{(\mu )}}\tilde{S}_{n}(\lambda )=\frac{\lambda \mu }{\mu
-\lambda }[(\tilde{S}_{n}(\mu ),\tilde{S}_{n}(\lambda )]  \label{d137}
\end{equation}%
as $\left\vert \lambda /\mu \right\vert <1,\lambda \rightarrow \infty ,$
which entails upon taking the limit $\mu \rightarrow \lambda $ the equation
\begin{equation}
\frac{d}{dt}\tilde{S}_{n}(\lambda )=\lambda ^{2}[\frac{d}{d\lambda }\tilde{S}%
_{n}(\lambda ),\tilde{S}_{n}(\lambda )],  \label{d138}
\end{equation}%
where we put, by definition,
\begin{equation}
\frac{d}{dt}:=\underset{j\in \mathbb{Z}_{+}}{\sum }\lambda ^{-j}\frac{d}{%
dt_{j}}.  \label{d139}
\end{equation}%
Now, taking into account Lemma \ \ref{Lm_10}, the related matrix \
expressions (\ref{d102}), \ (\ref{d108}) and \ (\ref{d111}) and having
analyzed the analytical structure of the resulting monodromy $\tilde{S}%
_{n}(\lambda )\in \mathcal{G}_{-},n\in \mathbb{Z}_{N},$ one can state by
means of simple but slightly cumbersome calculations the following
proposition.

\begin{proposition}
\label{Prop_11} The following differential relationships
\begin{equation}
\binom{\frac{d}{dt}\left[ \frac{2\lambda s_{n}^{(11)}}{s_{n}^{(21)}}\left( 1-%
\sqrt{1-\frac{s_{n}^{(12)}s_{n}^{(21)}}{s_{n}^{(11)}s_{n}^{(11)}}}\right) %
\right] =-\lambda ^{2}\frac{d}{d\lambda }\left[ \frac{2\lambda s_{n}^{(11)}}{%
s_{n}^{(21)}}\left( 1-\sqrt{1-\frac{s_{n}^{(12)}s_{n}^{(21)}}{%
s_{n}^{(11)}s_{n}^{(11)}}}\right) \right] ,}{\frac{d}{dt}\left[ -\frac{%
2\lambda s_{n}^{(11)}}{s_{n}^{(21)}}\left( 1+\sqrt{1-\frac{%
s_{n}^{(12)}s_{n}^{(21)}}{s_{n}^{(11)}s_{n}^{(11)}}}\right) \right] =\lambda
^{2}\frac{d}{d\lambda }\left[ -\frac{2\lambda s_{n}^{(11)}}{s_{n}^{(21)}}%
\left( 1+\sqrt{1-\frac{s_{n}^{(12)}s_{n}^{(21)}}{s_{n}^{(11)}s_{n}^{(11)}}}%
\right) \right] ,},  \label{d140}
\end{equation}%
jointly with \ equalities
\begin{equation}
\begin{array}{c}
D_{n}\hat{u}_{n}(\lambda )=\frac{2\lambda s_{n}^{(11)}}{s_{n}^{(21)}}\left(
1-\sqrt{1-\frac{s_{n}^{(12)}s_{n}^{(21)}}{s_{n}^{(11)}s_{n}^{(11)}}}\right) ,
\\
\\
D_{n}\check{u}_{n}\left( \lambda \right) =-\frac{2\lambda s_{n}^{(11)}}{%
s_{n}^{(21)}}\left( 1+\sqrt{1-\frac{s_{n}^{(12)}s_{n}^{(21)}}{%
s_{n}^{(11)}s_{n}^{(11)}}}\right) ,%
\end{array}
\label{d141}
\end{equation}%
hold as $\lambda \rightarrow \infty $ \ for all $\ n\in \mathbb{Z}_{N}.$
\end{proposition}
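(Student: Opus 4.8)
The plan is to start from the monodromy‑matrix evolution equation \eqref{d138}, namely
\begin{equation*}
\frac{d}{dt}\tilde{S}_{n}(\lambda )=\lambda ^{2}\left[\frac{d}{d\lambda }\tilde{S}_{n}(\lambda ),\tilde{S}_{n}(\lambda )\right],
\end{equation*}
and to extract from it scalar consequences in terms of the matrix entries $s_{n}^{(ij)}:=(\tilde{S}_{n}(\lambda ))_{ij}$. First I would write out the commutator on the right‑hand side entry by entry for the $2\times 2$ matrix $\tilde{S}_{n}(\lambda )\in\mathcal{G}_{-}$, using the explicit asymptotic form \eqref{d102}–\eqref{d103} from Lemma \ref{Lm_10} together with the fundamental‑matrix relations \eqref{d108} and the monodromy definition \eqref{d111}. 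The key structural observation is that the right‑hand side of \eqref{d138} has the shape $\lambda^{2}[\,\partial_\lambda\tilde{S}_n,\tilde{S}_n\,]$, so any scalar function $\Psi$ built from the entries that is invariant under the adjoint action $\tilde{S}_n\mapsto g\tilde{S}_ng^{-1}$ in an appropriate sense will satisfy the matching scalar identity $\tfrac{d}{dt}\Psi=\pm\lambda^{2}\tfrac{d}{d\lambda}\Psi$. The two combinations appearing inside the brackets in \eqref{d140} are exactly such invariants: they are the two branches of a quadratic built from $s_{n}^{(11)},s_{n}^{(12)},s_{n}^{(21)}$, with the discriminant $1-\dfrac{s_{n}^{(12)}s_{n}^{(21)}}{s_{n}^{(11)}s_{n}^{(11)}}$ encoding $\det\tilde{S}_n$ (recall $\det S_n(\lambda)=1$ from \eqref{d112}).

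Next I would establish the algebraic identities \eqref{d141}. The strategy here is to diagonalize, or at least block‑reduce, the recursion $\tilde{F}_{m+1,n}=l_m\tilde{F}_{m,n}$ of \eqref{d105} and compare the resulting off‑diagonal blocks with the defining functional relations \eqref{d104} for $D_n\check{u}_n(\lambda)$ and $D_n\hat{u}_n(\lambda)$. Concretely, the quantities $-2\lambda s_{n}^{(11)}/s_{n}^{(21)}$ multiplied by $(1\mp\sqrt{\cdots})$ are the two roots of the characteristic‑type quadratic associated with the eigenvector directions of $l_n$; matching these roots against the two entries of the column vector in \eqref{d104} yields the two lines of \eqref{d141}. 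I would verify the branch assignment (which root carries the $+$ and which the $-$ sign) by imposing the normalization \eqref{d106}, i.e. $\tilde{F}_{m,n}\to\mathbf{I}$ as $\lambda\to\infty$, together with the limits \eqref{d107}, namely $\lim_{\lambda\to\infty}D_n\check{u}_n=\lim_{\lambda\to\infty}D_n\hat{u}_n=D_nu_n$; these pin down the correct sign of the square root in each case.

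With the algebraic identities \eqref{d141} in hand, the differential relationships \eqref{d140} then follow by substituting these closed‑form expressions for $D_n\hat{u}_n(\lambda)$ and $D_n\check{u}_n(\lambda)$ into the scalar consequences of \eqref{d138} derived in the first step. The point is that $D_n\hat{u}_n$ and $D_n\check{u}_n$ are themselves invariant scalar functionals of the monodromy matrix, so each inherits the evolution law $\tfrac{d}{dt}(\cdot)=\pm\lambda^{2}\tfrac{d}{d\lambda}(\cdot)$, with the sign determined by which eigendirection (the $\lambda$‑growing or $\lambda$‑decaying one) the quantity corresponds to; this accounts for the opposite signs in the two lines of \eqref{d140}.

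I expect the main obstacle to be the careful bookkeeping of the asymptotic expansions and branch choices as $\lambda\to\infty$. The entries $s_{n}^{(ij)}$ are only defined as asymptotic series, and the square‑root combinations in \eqref{d140}–\eqref{d141} require controlling the leading behaviour of $s_{n}^{(12)}s_{n}^{(21)}/(s_{n}^{(11)})^{2}$ to guarantee that the radicand stays in the correct half‑plane and that the chosen branch is consistent with \eqref{d107}. Establishing that the two selected scalar combinations are genuinely adjoint‑invariant — so that the commutator on the right of \eqref{d138} produces an exact $\lambda$‑derivative of the same combination rather than merely a related expression — is the conceptual crux; the remaining manipulations, while admittedly cumbersome (as the statement itself warns), are essentially forced once the invariance and the branch normalization are secured.
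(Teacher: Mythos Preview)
Your overall route---start from the monodromy evolution (\ref{d138}), identify the combinations in (\ref{d141}) through the structure of the fundamental matrix in Lemma~\ref{Lm_10} and the functional relations (\ref{d104}), and then read off (\ref{d140}) as scalar consequences---is exactly what the paper indicates in the sentence preceding the proposition. The paper offers no argument beyond pointing to (\ref{d102}), (\ref{d108}), (\ref{d111}) and the flow (\ref{d138}) and declaring the rest ``simple but slightly cumbersome calculations'', so at the level of strategy you are aligned with it.

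There is, however, a real conceptual slip in the mechanism you propose. You assert that the two bracketed quantities are adjoint-invariant and that adjoint invariants of $\tilde S_n$ satisfy $\tfrac{d}{dt}\Psi=\pm\lambda^{2}\tfrac{d}{d\lambda}\Psi$ under (\ref{d138}). This is backwards: if $\Psi$ is genuinely invariant under $\tilde S_n\mapsto g\tilde S_n g^{-1}$, then $\tfrac{d}{dt}\Psi(\tilde S_n)=0$ identically for \emph{any} commutator evolution $\dot S=[A,S]$, irrespective of $A$, so an invariance argument would yield $\dot\Psi=0$, not the relations (\ref{d140}). Moreover the expressions in (\ref{d141}) are not conjugation invariants---they do not involve $s_n^{(22)}$ and are not symmetric functions of the eigenvalues of $\tilde S_n$. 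They are Riccati-type ratios attached to the two Bloch eigendirections of the monodromy (equivalently, the off-diagonal quotients appearing in the factorised form (\ref{d102}) of $\tilde F_{m,n}$). To obtain (\ref{d140}) you must compute directly how such eigenvector-slope variables evolve when $\dot S=\lambda^{2}[\partial_\lambda S,S]$: writing out the commutator entrywise and forming the appropriate combinations gives the first-order transport equations in $(t,\lambda)$, with the two signs arising from the two branches of the square root. Your later remarks about ``eigendirections'' and the ``$\lambda$-growing versus $\lambda$-decaying'' branch are in fact the correct picture; promote that to your primary mechanism and drop the adjoint-invariance claim, and the argument goes through along the lines the paper intends.
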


As a corollary of \ (\ref{d107}) and differential relationships \ (\ref{d140}%
) one obtains easily the following \ vertex operator representation \ (\ref%
{d136}) for the functions (\ref{d141}):%
\begin{equation}
\left(
\begin{array}{c}
D_{n}\hat{u}_{n}(\lambda ) \\
D_{n}\check{u}_{n}(\lambda )%
\end{array}%
\right) =\left(
\begin{array}{c}
D_{n}u_{n}(t_{0}+1/\lambda ,t_{1}+1/(2\lambda ^{2}),...) \\
D_{n}u_{n}(t_{0}-1/\lambda ,t_{1}-1/(2\lambda ^{2}),...)%
\end{array}%
\right) =\hat{X}_{\lambda }\left(
\begin{array}{c}
D_{n}u_{n} \\
D_{n}u_{n}%
\end{array}%
\right) ,  \label{d142}
\end{equation}%
holding as $\lambda \rightarrow \infty $ \ for all $\ n\in \mathbb{Z}_{N}.$
Recalling additionally the algebraic \ relationships (\ref{d104}) and \ (\ref%
{d104a}), one finds from \ (\ref{d142}) the \ related infinite hierarchy of
differential-difference dynamical systems on the manifold $M:$%
\begin{equation}
\frac{d}{dt_{j}}\left(
\begin{array}{c}
D_{n}u_{n} \\
D_{n}v_{n}%
\end{array}%
\right) =\left(
\begin{array}{c}
c_{j}(D_{n}^{2}u_{n})^{-j} \\
2c_{j}v_{n}(D_{n}u_{n})^{-1}(D_{n}^{2}u_{n})^{-j}%
\end{array}%
\right)  \label{d143}
\end{equation}%
for all $j\in \mathbb{Z}_{+},$ where $c_{j}\in \mathbb{R},j\in \mathbb{Z}%
_{+},$ \ are some recurrently calculated constant coefficients.

Thus, we see that the constructed hierarchy \ (\ref{d143}) does not contain
a "naive" \ discretization of the generalized Riemann type system of
equations \ (\ref{d100}) in spite of the fact that the difference Lax type
linear spectral problem \ (\ref{d101}) regularly reduces to the
corresponding linear spectral problem for dynamical system \ (\ref{d100}). \

Nonetheless, such a situation is not faced in the case of the inviscid
discrete Riemann-Burgers dynamical system \ (\ref{D2b}):
\begin{equation}
dw_{n}/dt=w_{n}(w_{n+1}-w_{n-1})/2:=K_{n}[w],  \label{d144}
\end{equation}%
defined on an $N$-periodical discrete manifold $M\subset l_{2}(%
\mathbb{Z}
_{N};\mathbb{R}).$ Following the gradient-holonomic scheme of studying the
integrability of \ (\ref{d144}), we first state the existence of an infinite
hierarchy of conservation laws and the corresponding bi-Hamiltonian
formulation.

Consider, owing to Proposition \label{Prop_5}, the determining equation \ (%
\ref{D9})
\begin{equation}
d\varphi _{_{n}}/dt-[(\Delta -\Delta
^{-1})w_{n}/2-(w_{n+1}-w_{n-1})/2]\varphi _{_{n}}=0  \label{d145}
\end{equation}%
and \ its asymptotical solution $\varphi \in T^{\ast }(M)$ in the form \ (%
\ref{D31}):
\begin{equation}
\varphi _{_{n}}=\prod\limits_{j=0}^{n-1}\sigma _{j}[w;\lambda ],
\label{d146}
\end{equation}%
where $n\in
\mathbb{Z}
$ and the local functionals $\sigma _{j}[w;\lambda ],j\in
\mathbb{Z}
_{+},$ \ \ possess as $\lambda \rightarrow \infty $ the \ following
expansions
\begin{equation}
\sigma _{j}[w;\lambda ]\simeq \sum_{s\in
\mathbb{Z}
_{+}}\sigma _{j}^{(s)}[w]\lambda ^{-s}.  \label{d147}
\end{equation}%
Having solved recurrently the resulting functional equations%
\begin{equation}
D_{n}^{-1}(\ln \sigma _{n})_{t}+(w_{n-1}/\sigma _{n-1}-w_{n+1}\sigma
_{n})/2+(w_{n+1}-w_{n-1})=0,  \label{d148}
\end{equation}%
one finds easily the infinite hierarchy \ (\ref{d87}) of conservations laws:
\begin{equation}
\gamma _{0}=\sum_{n=0}^{N-1}(w_{n}+w_{n-1}),\gamma _{1}=0,\gamma
_{2}=\sum_{n=0}^{N-1}[(w_{n}+w_{n-1})^{2}+w_{n}(w_{n-1}+w_{n+1})],...,\gamma
_{2j+1}=0  \label{d149}
\end{equation}%
for all $j\in
\mathbb{Z}
_{+}.$ Now, applying to the hierarchy of conservation laws the approach of
Lemma \ \ref{Lm_4}, one can find by means of \ slightly cumbersome and
lengthy calculations the following pair \ $\vartheta ,\eta :T^{\ast
}(M)\rightarrow T(M)$ of compatible Poissonian operators on the manifold $M:$%
\begin{equation}
\vartheta _{n}:=w_{n}(\Delta -\Delta ^{-1})w_{n},\text{ \ }\eta
_{n}:=(w_{n}w_{n+1}\Delta ^{2}-w_{n}w_{n-1}\Delta ^{-2})(w_{n}+w_{n-1}\Delta
^{-1}).  \label{d150}
\end{equation}%
In particular, there easily obtains the Hamiltonian representation of the
Burgers-Riemann system \ (\ref{d144}):%
\begin{equation}
dw_{n}/dt=-\vartheta _{n}\mathrm{grad}\ {H}_{\vartheta },\text{ }%
H_{\vartheta }:=-\sum_{n=0}^{N-1}(w_{n}+w_{n-1})/2.  \label{d151}
\end{equation}%
Moreover, the first Poissonian structure of \ (\ref{d150}) allows the
continuous limit $\underset{_{\substack{ \Delta x\rightarrow 0 \\ %
n\rightarrow \infty }}}{\lim }w_{n}:=w(x),$ if $n\Delta x:=x\in \mathbb{R},$
to the well known \cite{Kup1} \ correct form
\begin{equation}
\vartheta :=(w\partial +\partial w)(w+\partial ^{-1}w\partial )/2.
\label{d152}
\end{equation}%
Making use the Poissonian pair \ (\ref{d150}) one can retrieve within the
gradient holonomic scheme a Lax type representation related to the inviscid
discrete Riemann-Burgers dynamical system \ \ (\ref{d144}), whose $l$%
-operator is given by the following \ matrix expression:%
\begin{equation}
\text{\ }l_{n}[w;\lambda ]=\left(
\begin{array}{cc}
\lambda  & -w_{n} \\
1 & 0%
\end{array}%
\right)   \label{d153}
\end{equation}%
for $n\in
\mathbb{Z}
$ and $\lambda \in \mathbb{C}.$ Mention only that the higher flows,
generated by the inviscid Burgers-Riemann dynamical system \ (\ref{d144}), \
have nothing to do with the generalized Riemann type hydrodynamic systems \ (%
\ref{d98}) and their discrete approximations. Thereby, it is necessary to
develop a different approach to constructing their proper integrable
discrete Lax type representation, compatible with the related continuous
limit.

\section{Conclusion}

The gradient-holonomic scheme of direct studying Lax type integrability of
differential-difference nonlinear dynamical systems described in this work
appears to be effective enough for applications in the one-dimensional case,
similar to the case \cite{PM,MBPS,BPS,HPP,Pr,MS} of nonlinear dynamical
systems defined on spatially one-dimensional functional manifolds. This
algorithm makes it possible to \ construct simply enough an infinite
hierarchy of conservation laws as well as to calculate their compatible
co-symplectic structures. As it was \ also shown, the reduced via the
Bogoyavlensky-Novikov approach integrable Hamiltonian dynamical systems on
the corresponding invariant periodic submanifolds generate finite
dimensional Liouville integrable Hamiltonian systems with respect to the
canonical Gelfand-Dikiy type symplectic structures. As \ interesting
examples \ the \ complete integrability analysis of the nonlinear discrete
Schr\"{o}dinger, the Ragnisco-Tu and Burgers-Riemann type dynamical systems
was presented.

Subject to different not-direct approaches to studying the integrability of
differential-difference dynamical systems on discrete manifolds it is worth
to mention the works \cite{Kup,Kup1,DT,MV,BG,BSzP,HPS,Mi1,HS,Wi} based on
the inverse spectral transform and related Lie-algebraic methods \cite%
{LWY,PZ,HPS,Mi1,HS,Wi}, where there are constructed \textit{a priori }both
conservation laws and\textit{\ }the \ corresponding Lax type
representations. \ Concerning these approaches, many of their important
analytical properties were constructively absorbed \ by the
gradient-holonomic scheme of this work and realized directly as an algorithm.

\section{Acknowledgements}

The second author (N.B.) is very indebted to the Abdus Salam International
Centre for Theoretical Physics, Trieste, Italy, for the kind hospitality
during his ICTP-2010 research scholarship. A.P. is cordially indebted to
Profs. Denis Blackmore (NJIT, Newark, USA) and Maksim Pavlov (Lebedev
Physics Institute of RAS, Moscow, Russia) for valuable comments and remarks
concerning the gradient-holonomic scheme of integrability analysis and \ its
application to the generalized (owing to D. Holm and M. Pavlov) Riemann type
hydrodynamical hierarchy of dynamical systems. \ Authors are also cordially
appreciated to Mrs. Natalia Prykarpatska (Lviv, Ukraine) and Dilys Grilli
(Trieste, Publications Office, ICTP) for professional help in editing and
preparing the manuscript for publication.

\bigskip

\end{document}